\def\E{\textsf{E}}
\def\Exp{\textsf{Exp}}
\def\Var{\textsf{Var}}
\def\Zipf{\textsf{Zipf}}
\def\subZipf{\textsf{subZipf}}
\newtheorem{theorem}{Theorem}[section]
\newtheorem{lemma}[theorem]{Lemma}
\newtheorem{definition}[theorem]{Definition}
\newtheorem{corollary}[theorem]{Corollary}
\theoremstyle{remark}
\newtheorem{remark}[theorem]{Remark}
\renewcommand{\epsilon}{\varepsilon}
\begin{document}

\title{Composable Sketches for Functions of Frequencies:\\ Beyond the Worst Case}

\author{
Edith Cohen\\
Google Research and
Tel Aviv University\\
\texttt{edith@cohenwang.com}
\and
Ofir Geri\thanks{Most of this work was done while interning at Google Research.}\\
Stanford University\\
\texttt{ofirgeri@cs.stanford.edu}
\and
Rasmus Pagh\\
Google Research,
BARC, and
IT University of Copenhagen\\
\texttt{pagh@itu.dk}
}

\date{}

\maketitle

\begin{abstract}
Recently there has been increased interest in using machine learning techniques to improve classical algorithms.
In this paper we study when it is possible to construct compact, composable sketches
 for weighted
sampling and statistics estimation according to functions of data
frequencies.  Such structures are now central components of
large-scale data analytics and machine learning pipelines. However, many common
functions, such as thresholds and
$p$th frequency moments with $p>2$, are known to require polynomial-size sketches in the worst case.  We explore performance beyond the
worst case under two different types of assumptions.  The first is
having access to noisy \emph{advice} on item frequencies. This
continues the line of work of Hsu et al.~(ICLR 2019), who assume
predictions are provided by a machine learning model.
 The second is providing guaranteed performance on a restricted class of
input frequency distributions that are better aligned with what is
observed in practice. This extends the work on heavy hitters under Zipfian distributions in a seminal paper of Charikar et al.~(ICALP 2002).
Surprisingly, we show analytically and empirically that ``in practice'' small polylogarithmic-size sketches provide
accuracy for ``hard'' functions.
\end{abstract}

\section{Introduction}

Composable sketches, also known as mergeable
summaries~\cite{AgarwalMergeable},  are data
structures that support summarizing large amounts of
distributed or streamed data
with small computational resources (time, communication, and space).
Such sketches support processing
additional data elements and merging sketches of multiple datasets to obtain a sketch of
the union of the datasets.
 This design is suitable for working with streaming data (by processing elements as they arrive) and distributed datasets, and allows us to parallelize computations over massive datasets.
Sketches are now a central part of managing large-scale data, with
application areas as varied as federated learning~\cite{pmlr-v54-mcmahan17a} and statistics collection
at network switches~\cite{LiuMVSB:sigcomm2016,LiuBEKBFS:Sigcomm2019}.

The datasets we consider consist of {\em elements} that are key-value pairs $(x,v)$ where
$v\geq 0$.
The frequency $w_x$ of a key $x$ is defined as the sum of the values of elements with that
key.  When the values of all elements are $1$, the frequency is simply the
number of  occurrences of a key in the dataset.  Examples of such datasets
include search queries, network traffic, user interactions, or training data
from many devices. These datasets are typically
distributed or streamed.

Given a dataset of this form, one is often interested in computing statistics that depend on the frequencies of keys. For example, the statistics of interest can be the number of keys with frequency greater than some constant (threshold functions), or the second frequency moment ($\sum_x{w_x^2}$), which can be used to estimate the skew of the data. Generally, we are interested in statistics of the form
\begin{equation} \label{genstats:eq}
  \sum_x L_x f(w_x)
\end{equation}
where $f$ is some function
applied to the frequencies of the keys and the coefficients $L_x$ are provided (for
example as a function of the features of the key $x$).
An important special case, popularized in the seminal
work of~\cite{AMS}, is computing the $f$-statistics of the data:
 $\|f(\boldsymbol{w})\|_1 = \sum_x f(w_x)$.

One way to compute statistics of the form \eqref{genstats:eq} is to compute a random sample of keys, and then use the sample to compute estimates for the statistics. In order to compute low-error estimates, the sampling has to be weighted in a way that depends on the target function $f$: each key $x$ is weighted by $f(w_x)$. Since the problem of computing a weighted sample is more general than computing $f$-statistics, our focus in this work will be on composable sketches for weighted sampling according to different functions of frequencies.

The tasks of sampling or statistics computation can always be performed by first computing a table
of key and frequency pairs $(x,w_x)$.  But this aggregation requires a data structure
of size (and in turn, communication or space) that grows linearly with the number of keys whereas
ideally we want the size to grow at most
polylogarithmically.
With such small sketches we can only hope for approximate results and
generally we see a trade-off between sketch size, which determines the storage or
communication needs of the computation, and accuracy.

When estimating statistics from samples, the accuracy depends on the sample size and on how much the sampling probabilities ``suit'' the
statistics we are estimating. In order to minimize the error, the sampling probability of each key $x$ should be (roughly) proportional to $f(w_x)$. This leads to a natural and extensively-studied question: for which functions $f$ can we design efficient sampling sketches?

The literature and practice are ripe with surprising successes for sketching,
including small (polylogarithmic size) sketch structures for estimating the number of
distinct elements~\cite{FlajoletMartin85,hyperloglog:2007} ($f(w) =
I_{w>0}$), frequency moments ($f(w) = w^p$) for
$p\in [0,2]$~\cite{AMS,indyk:stable}, and computing $\ell_p$
heavy hitters (for $p\leq 2$, where an $\ell_p$ $\varepsilon$-heavy
hitter is a key $x$ with $w_x^p \geq \varepsilon \|\boldsymbol{w}\|^p_p$) ~\cite{MisraGries:1982,ccf:icalp2002,MM:vldb2002,CormodeMuthu:2005,spacesaving:ICDT2005}.
(We use $I_\sigma$ to denote the indicator function that is 1 if the predicate $\sigma$ is true, and 0 otherwise.)
A variety of methods now support sampling via small sketches for rich classes of
functions of frequencies~\cite{CapSampling,mcgregor2016better,JayaramW:Focs2018,CohenGeri:NeurIPS2019}, including the moments $f(w)=w^p$ for $p \in [0,2]$ and the family of concave sublinear functions.

The flip side is that we know of lower bounds that limit the
performance of sketches using small space for some fundamental tasks~\cite{AMS}.
A full characterization of functions $f$ for which $f$-statistics can be estimated using polylogarithmic-size sketches was provided in~\cite{BravermanOstro:STOC2010}.
Examples of ``hard'' functions are
thresholds $f(w) = I_{w>T}$ (counting the number of keys with frequency above
a specified threshold value $T$), threshold weights $f(w) = w
I_{w>T}$, and high frequency moments
$f(w) = w^p$ with
$p>2$.   Estimating the $p$th frequency moment ($\sum_x w_x^p$) for $p>2$ is known to require
space ~$\Omega(n^{1-2/p})$~\cite{AMS,li2013tight}, where $n$ is the
number of keys.  These particular functions are important for downstream
tasks: threshold aggregates characterize the
distribution of frequencies, and high moment estimation is used in
the method of moments, graph applications~\cite{EdenRS:siamdm2019}, and for estimating the
cardinality of multi-way self-joins~\cite{alon2002tracking} (a $p$th
moment is used for estimating a $p$-way join).

\medskip

{\bf Beyond the worst case.}
Much of the discussion of sketching classified functions into ``easy'' and ``hard''. For example, there are known efficient methods for sampling according to $f(w)=w^p$ for $p \leq 2$, while for $p > 2$, even the easier task of computing the $p$th moment is known to require polynomial space.
However, the hard data distributions used to establish lower bounds for
some functions of frequency are arguably not very realistic.
Real data tends to follow nice distributions and is often (at least somewhat) predictable.
We study sketching where additional assumptions
allow us to circumvent these lower bounds while still providing
theoretical guarantees on the quality of the estimates.
We consider two distinct ways of going beyond the worst case:
1) access to {\em advice
models}, and 2) making natural assumptions on the frequency distribution of the dataset.

For the sampling sketches described in this paper, we use a notion
of {\em overhead} to capture the discrepancy between the sampling
probabilities used in the sketch and the ``ideal'' sampling probabilities
of weighted sampling according to a target function of frequency $f$.
An immensely powerful
property of using sampling to estimate statistics of the form~\eqref{genstats:eq}  is that the overhead translates
into a multiplicative increase in sample/sketch size, without compromising
the accuracy of the results (with respect to what an ideal ``benchmark''
weighted sample provides).  This property was used in different contexts in
prior work, e.g., \cite{FriezeKV:JACM2004,multiw:VLDB2009}, and
we show that it can be harnessed for our purposes as well.
For the task of estimating $f$-statistics, we use a tailored definition of overhead, that is smaller than the overhead for the more general statistics \eqref{genstats:eq}.

\medskip
{\bf Advice model.}
The advice model for sketching was recently proposed and
studied by Hsu et al.~\cite{hsuIKV:ICLR2019}. The advice takes the form of an
oracle that is able to identify whether a given key is a heavy hitter.
Such advice can be generated,
for example, by a machine learning model trained on past data.  The
use of the ``predictability'' of data to improve performance was
also demonstrated in~\cite{KraskaBCDP:sigmod2018,IndykVY:NeurIPS2019}.
A similar heavy hitter oracle was used in \cite{jiangLLRW:ICLR2020} to study additional problems in the streaming setting. For high frequency moments, they obtained sketch size $O(n^{1/2-1/p})$, a quadratic
improvement over the worst-case lower bound.

Here we propose a sketch for {\em sampling by advice}.  We assume an advice oracle that returns a noisy prediction of the frequency of each key. This type of advice oracle was used in the experimental section of \cite{hsuIKV:ICLR2019} in order to detect heavy hitters. We show that when the predicted $f(w_x)$ for keys with
above-average contributions $f(w_x)$ is approximately accurate within a factor $C$, our sample has overhead
$O(C)$.  That is, the uncertainty in the advice
translates to a factor $O(C)$ increase in the sketch size but does \emph{not} impact the
accuracy.

\medskip
{\bf  Frequency-function combinations.}
Typically, one designs sketch structures to provide
guarantees for a certain function $f$ and any set of  input frequencies
$\boldsymbol{w}$.  The performance of a sketch structure is then analyzed for a
\emph{worst-case} frequency distribution.
The analysis of the advice model
also assumes worst-case distributions (with the benefit that comes
from the advice).    We depart from this and study sketch performance for a
{\em combination} $(F,W,h)$ of a family of functions $F$, a family $W$ of frequency distributions,
and an overhead factor $h$. Specifically, we seek sampling sketches
that produce
weighted samples with overhead at most $h$ with respect to
$f(\boldsymbol{w})$ for {\em every} function $f\in F$ and frequency distribution
${\boldsymbol w}\in W$.
By limiting the set $W$ of input frequency distributions we are able
to provide performance guarantees for a wider set $F$ of functions of
frequency, including functions that are worst-case hard.  We
particularly seek combinations with frequency distributions $W$
that are typical in practice.
Another powerful property of the combination formulation is that
it provides multi-objective
guarantees with respect to a multiple functions of frequency $F$ using the same sketch~\cite{MultiObjective,CapSampling,LiuMVSB:sigcomm2016}.

The performance of sketch structures on ``natural'' distributions was
previously considered in a seminal paper by  Charikar et al.~\cite{ccf:icalp2002}.
The paper introduced the \emph{Count Sketch} structure for heavy hitter detection, where
an $\ell_2$ $\varepsilon$-heavy hitter is a key $x$ with $w_x^2 \geq
\varepsilon  \|\boldsymbol{w}\|^2_2$.  They
also show that for Zipf-distributed data with parameter $\alpha \geq
1/2$, a count sketch of size $O(k)$ can be used to find the $k$ heaviest keys (a
worst-case hard problem) and that an $\ell_1$ sample can only identify
the heaviest keys for Zipf parameter $\alpha \geq 1$.

We  significantly  extend these insights to a wider family of
frequency distributions
and to a surprisingly broad class of functions of frequencies.
In particular we show that all  high moments ($p>2$) are ``easy''  as
long as the frequency distribution has an $\ell_1$ or $\ell_2$
$\varepsilon$-heavy hitter.  In this case, an $\ell_1$ or $\ell_2$ sample with
overhead $1/\varepsilon$ can be used to estimate all high moments.   We also show that in
a sense this characterization is tight in that if we allow all frequencies, we meet the known lower bounds.
It is very common for datasets in practice to have a most frequent key that
is an $\ell_1$ or $\ell_2$ $\varepsilon$-heavy hitter. This holds in particular for Zipf or approximate Zipf distributions.

Moreover, we show that
Zipf frequency distributions have
small {\em universal} sketches that apply to {\em any}
monotone function of frequency (including thresholds and high
moments).
Zipf frequencies were previously considered in the advice model~\cite{AamandIV:arxiv2019}.
Interestingly, we show that for these distributions a single small
sketch is effective with all monotone functions of frequency,  even
without advice.
In these cases, universal sampling is achieved with
off-the-shelf polylogarithmic-size sketches such as $\ell_p$ samples for $p\leq
2$ and multi-objective concave-sublinear samples
\cite{CCD:sigmetrics12,CapSampling,mcgregor2016better,JayaramW:Focs2018}.

\medskip
{\bf  Empirical study.}
We complement our analysis with an empirical study on multiple real-world
datasets including datasets studied in prior work on advice models \cite{AOL,CAIDA,StackOverflow,UGR}. (Additional discussion of the datasets appears in Section~\ref{datasets:sec}.)
We apply sampling by advice, with advice based on models from prior work or direct use of frequencies from
past data.  We then estimate high frequency moments from the samples. We observe that
sampling-by-advice was effective on these datasets, yielding low error
with small sample size.  We also observed, however, that $\ell_2$ and
$\ell_1$ samplers were surprisingly accurate on these tasks, with $\ell_2$ samplers
generally outperforming sampling by advice.  The surprisingly good performance of these simple sampling schemes is suggested from our analysis.

We compute the overhead factors for some
off-the-shelf sampling sketches on multiple
real-world datasets with the objectives of
$\ell_p$ sampling ($p>2$) and universal sampling. We find these factors to be surprisingly
small.  For example, the measured overhead of using $\ell_2$ sampling  for the objective of $\ell_p$ sampling ($p \geq 2$) is in the range $[1.18,21]$.  For universal sampling, the observed overhead
is lower with $\ell_1$ and with multi-objective concave
sublinear samples than with $\ell_2$ sampling and is in $[93,773]$,  comparing very
favorably with the alternative of computing a full table.
Finally, we use sketches to estimate the
distribution of rank versus frequency, which is an important tool for optimizing performance across application domains (for network flows, files, jobs, or search queries).  We find that $\ell_1$ samples provide quality estimates, which is explained by our analytical results.

\medskip
{\bf  Organization.}
In Section~\ref{sec:preliminaries}, we present the preliminaries, including the definition of overhead and description of off-the-shelf sampling sketches that we use. Our study of the advice model is presented in Section~\ref{sec:advice}. Our study of frequency-function combinations, particularly in the context of $\ell_p$ sampling, is presented in Section~\ref{sec:f-f-comb}. Section~\ref{uni:sec} discusses universal samples. Our experimental study is presented throughout Sections~\ref{sec:advice} and \ref{sec:f-f-comb}. Additional experimental results are reported in Appendix~\ref{actualmore:sec}.

\section{Preliminaries}\label{sec:preliminaries}
We consider datasets where each data element is a (key, value)
pair. The keys belong to a universe denoted by $\mathcal{X}$ (e.g.,
the set of possible users or words), and each key may appear in more
than one element. The values are positive, and for each key $x \in
\mathcal{X}$, we define its \emph{frequency} $w_x \geq 0$ to be the
sum of values of all elements with key $x$. If there are no elements with key $x$, $w_x = 0$. The data elements may
appear as a stream or be stored in a distributed manner. We denote
the number of active keys (keys with frequency greater than $0$) by $n$.

We are interested in sketches that produce a weighted sample of keys according to
some function $f$ of their frequencies, which means that the weight of each key $x$ is $f(w_x)$. In turn, the sampling probability of key $x$ is
roughly proportional to $f(w_x)$. We denote the vector of the frequencies of all active keys by $\boldsymbol{w}$ (in any fixed order).
We use $f(\boldsymbol{w})$ as a shorthand for the vector of all values $f(w_x)$.

\paragraph{Estimates from a sample.}
The focus of this work is sampling schemes that produce a random subset $S \subseteq \mathcal{X}$ of the keys in the
dataset. Each active key $x$ is included in the sample $S$ with probability $p_x$ that depends on the frequency $w_x$.
From such a sample, we can compute for each key $x$ the \emph{inverse probability estimate} \cite{HT52} of $f(w_x)$ defined as
\[
\widehat{f(w_x)} =
\begin{cases}
\frac{f(w_x)}{p_x} & \text{ if } x \in S\\
0 & \text{ if } x \notin S
\end{cases} \enspace .
\]
These per-key estimates are \emph{unbiased} ($\E\left[\widehat{f(w_x)}\right] = f(w_x)$). They can be summed to obtain unbiased estimates of
the $f$-statistics $\sum_{x\in H}{f(w_x)}$ of a domain $H\subset\mathcal{X}$:
\[ \widehat{\sum_{x\in H}f(w_x)} :=
  \sum_{x\in H} \widehat{f(w_x)}  = \sum_{x \in
    S \cap H}{\widehat{f(w_x)}}\enspace . \]
The last equality follows because $\widehat{f(w_x)}=0$ for keys
not in the sample. Generally, we use $\widehat{a}$ to denote the estimator for a quantity $a$ (e.g., $\widehat{\sum_{x\in H}f(w_x)}$ defined above is the estimator for $\sum_{x\in H}{f(w_x)}$).
We can similarly get \emph{unbiased} estimates for other statistics that are linear in
$f(w_x)$, e.g., $\sum_{x \in \mathcal{X}}{L_xf(w_x)}$ (for coefficients $L_x$).

\paragraph{Bottom-$k$ samples.}
We briefly describe a type of samples that will appear in our algorithms and analysis. In a \emph{bottom-$k$ sample} \cite{bottomk07:ds}, we draw a random value (called seed) for each active key $x$. The distribution of the seed typically depends on the frequency of the key. Then, to obtain a sample of size $k$, we keep the $k$ keys with lowest seed values. Many sampling schemes can be implemented as bottom-$k$ samples, including our sketch in Section~\ref{sec:advice}, PPSWOR, and sampling sketches for concave sublinear functions (the last two are described further in Section~\ref{offtheshelf:sec}).

\subsection{Benchmark Variance Bounds}\label{sec:benchmark-bounds}
In this work, we design sampling sketches and use them to compute estimates for some function $f$ applied to the frequencies of keys ($f(w_x)$ for key $x$). We measure performance with respect to that of a ``benchmark''  weighted sampling
scheme where the weight of each key $x$ is $f(w_x)$. Recall that for ``hard''  functions $f$, these
schemes can not be implemented with small sketches.

For an output sample of size $k$, these benchmark schemes include (i)~{\em probability proportional to size} (PPS) with
replacement, where the samples consists of $k$ independent draws in which key $x$ is
selected with probability $f(w_x)/\|f(\boldsymbol{w})\|_1$, (ii)~PPS without
replacement (PPSWOR~\cite{Rosen1972:successive,Rosen1997a,bottomk:VLDB2008}), or (iii)~priority
sampling~\cite{Ohlsson_SPS:1998,DLT:jacm07}.
When using PPS with replacement, we get unbiased estimators $\widehat{f(w_x)}$ of $f(w_x)$ for all keys $x$, and the variance is
upper bounded by
  \begin{equation} \label{keyvarbound:eq}
 \Var[\widehat{f(w_x)}] \leq  \frac{1}{k} f(w_x) \|f(\boldsymbol{w}) \|_1.
\end{equation}
Similar bounds (where the factor $k$ in the denominator is replaced by $k-2$) can be derived for PPSWOR (see, e.g., \cite{CapSampling}) and priority sampling \cite{DLTVarAnalysis}.\footnote{The bound appears in some texts as $\frac{1}{k-2} f(w_x) \|f(\boldsymbol{w}) \|_1$ and as $\frac{1}{k-1} f(w_x) \|f(\boldsymbol{w}) \|_1$ in others. Specifically, in bottom-$k$ implementations of PPSWOR/priority sampling, we need to store another value (the inclusion threshold) in addition to the sampled keys. If the $k$ keys we store include the threshold (so the sample size is effectively $k-1$), the bound has $k-2$ in the denominator. If we store $k$ keys and the inclusion threshold is stored separately (so we store a total of $k+1$ keys), the bound has $k-1$ in the denominator.}
\begin{remark}
We use an upper bound on the variance as the benchmark (instead of the exact variance) for the following reason: Assume for simplicity that $k=1$, in which case, in PPS each key $x$ is sampled with probability $p_x = f(w_x)/\|f(\boldsymbol{w})\|_1$. When $p_x$ approaches $1$ for some key $x$ (that is, one key dominates the data), the variance $\Var[\widehat{f(w_x)}] = (f(w_x))^2\left(\frac{1}{p_x}-1\right)$ of the inverse-probability estimator approaches $0$. Recall that the variance of PPS is a benchmark we are trying to get close to using a different sampling scheme. Since the variance when using PPS is $0$, we cannot approximate it multiplicatively if we use a sampling scheme where $x$ is sampled with a probability that multiplicatively approximates $p_x$. However, when there is not just one key that dominates the data, that is, when we know that $p_x \leq 1 - \frac{1}{c}$ for all $x$, we get that $\Var[\widehat{f(w_x)}] \geq (f(w_x))^2\frac{1}{cp_x}$, so the bound on the variance is almost tight.
\end{remark}
Consequently, the variance of the sum
estimator for the $f$-statistics of a  domain $H$ is bounded by (due to nonpositive covariance shown in earlier works, e.g., \cite{CapSampling}):
\begin{equation} \label{domainubound:eq}
 \Var\left[\widehat{\sum_{x\in H}f(w_x)}\right] \leq \frac{1}{k} \sum_{x\in H} f(w_x) \|f(\boldsymbol{w}) \|_1 .
\end{equation}
The variance on the estimate of $\|f(\boldsymbol{w}) \|_1$ is bounded by
\[ \Var[\widehat{\| f(\boldsymbol{w})\|_1}] \leq \frac{1}{k} \|f(\boldsymbol{w}) \|_1^2 \enspace .\]
With these ``benchmark'' schemes, if we wish to get multiplicative error bound (normalized root mean squared
error) of $\varepsilon$ for estimating $\| f(\boldsymbol{w})\|_1$, we need
sample size $k = O(\varepsilon^{-2})$.  We note that the estimates are
also concentrated in the Chernoff sense~\cite{MultiObjective,DLT:jacm07}.

We refer to the probability vector
\[ p_x := \frac{f(w_x)}{\|f(\boldsymbol{w})\|_1}\]  as the {\em PPS
  sampling probabilities} for $f(\boldsymbol{w})$.
When $f(w)=w^p$ (for $p>0$) we refer to sampling with the respective
PPS probabilities as $\ell_p$ sampling.

\paragraph{Emulating a weighted sample.}
Let $\boldsymbol{p}$ be the base PPS probabilities for $f(\boldsymbol{w})$.
When we use a weighted sampling scheme with weights $\boldsymbol{q} \not= \boldsymbol{p}$
then the variance bound \eqref{domainubound:eq} does not apply.
We will say that weighted sampling according to $\boldsymbol{q}$
{\em emulates} weighted sampling according to $\boldsymbol{p}$ with
{\em overhead} $h$ if for all $k$ and for all domains $H \subseteq \mathcal{X}$, a sample of
size $k h$ provides the variance bound \eqref{domainubound:eq} (and
the respective concentration bounds).
\begin{lemma} \label{varfactor:lemma}
The overhead of emulating weighted sampling according to $\boldsymbol{p}$ using
weighted sampling according to $\boldsymbol{q}$ is at most
\[h(p,q) := \max_x \frac{p_x}{q_x} \enspace .\]
\end{lemma}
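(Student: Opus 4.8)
The plan is to reduce the statement to a per-key variance comparison, using the fact that each of the three benchmark schemes (pps with replacement, ppswor, priority sampling) has per-key inverse-probability estimator whose variance scales inversely with the sample size and with the key's sampling weight. The point of the lemma is then simply that replacing the ideal weights $\boldsymbol p$ by $\boldsymbol q$ inflates each key's sampling weight by at most the factor $h(p,q)$, and paying a factor $h(p,q)$ in sample size exactly cancels this, recovering the benchmark variance bound \eqref{keyvarbound:eq}–\eqref{domainubound:eq} at the original size $k$.

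First I would record the per-key variance for a $\boldsymbol q$-weighted sample: by the same arguments that establish \eqref{keyvarbound:eq} (and the cited analyses of ppswor and priority sampling), a sample of size $k'$ drawn with weights $\boldsymbol q$ satisfies $\Var[\widehat{f(w_x)}] \le (f(w_x))^2/(k' q_x)$ (with $k'-2$ in place of $k'$ for ppswor and priority sampling). Substituting $k' = k\,h(p,q)$ and using the definition of $h(p,q)$, which is equivalent to $q_x \ge p_x / h(p,q)$ for every key $x$, gives
\[
\Var[\widehat{f(w_x)}] \le \frac{(f(w_x))^2}{k\,h(p,q)\,q_x} \le \frac{(f(w_x))^2}{k\,p_x} = \frac{1}{k}\, f(w_x)\,\|f(\boldsymbol{w})\|_1 ,
\]
which is precisely \eqref{keyvarbound:eq} for a size-$k$ benchmark sample according to $\boldsymbol p$. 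For the $k-2$ variants the same conclusion holds because $h(p,q) \ge 1$ (both $\boldsymbol p$ and $\boldsymbol q$ are probability vectors, so $\max_x p_x/q_x \ge 1$): then $(k\,h(p,q)-2)\,q_x \ge (k - 2/h(p,q))\,p_x \ge (k-2)\,p_x$, so the size-$kh$ bound for $\boldsymbol q$ dominates the size-$k$ bound for $\boldsymbol p$.

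Next I would lift this to arbitrary domains. Since the per-key estimators of the $\boldsymbol q$-weighted scheme have non-positive pairwise covariances — the same property invoked in the derivation of \eqref{domainubound:eq} and established in the cited earlier works (in a single pps draw at most one key is selected; for ppswor and priority sampling it is a known fact) — summing the per-key bound over $x \in H$ yields
\[
\Var\!\left[\widehat{\sum_{x\in H} f(w_x)}\right] \le \sum_{x\in H}\Var[\widehat{f(w_x)}] \le \frac{1}{k}\sum_{x\in H} f(w_x)\,\|f(\boldsymbol{w})\|_1 ,
\]
i.e. \eqref{domainubound:eq} at size $k$. The Chernoff-type concentration bounds referenced after \eqref{domainubound:eq} transfer by the identical substitution, as they are likewise stated in terms of the sample size and the per-key sampling weights.

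The only real care needed is bookkeeping across the three benchmark schemes: verifying that the $k-2$ correction for ppswor and priority sampling does not spoil the argument (it does not, as shown above, because $h(p,q)\ge 1$), and checking that the ``sample of size $k\,h$'' in the definition of overhead is interpreted consistently with the size parameter of whichever benchmark scheme is in play. There is no combinatorial obstacle here; the substance of the lemma is exactly the observation that a reweighting by a factor of at most $h$ buys the benchmark guarantees at the cost of an $h$-fold larger sample and nothing in accuracy.
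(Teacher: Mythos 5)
Your proposal is correct and takes essentially the same route as the paper: bound the per-key variance under $\boldsymbol{q}$-weighted sampling, observe that it exceeds the benchmark bound by at most the factor $\max_x p_x/q_x$, and then absorb that factor into the sample size. The only addition is your explicit check that the $k-2$ corrections for ppswor and priority sampling do not break the argument, which the paper leaves implicit.
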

\begin{proof}
We first bound the variance
of $\widehat{f(w_x)}$ for a key $x$ when using weighted sampling by
$\boldsymbol{q}$.
 Consider a weighted sample of size $k$
according to base probabilities $q_x$.  The probability of $x$ being included in a with-replacement sample of size $k$ is $1-(1-q_x)^k$. Then, the variance of the inverse-probability estimator is
\begin{align*}
 \Var[\widehat{f(w_x)}] &= (\E[\widehat{f(w_x)}])^2 - (f(w_x))^2\\
 &= (f(w_x))^2\left(\frac{1}{1-(1-q_x)^k}-1\right)\\
 &= (f(w_x))^2 \cdot \frac{(1-q_x)^k}{1-(1-q_x)^k}\\
 &= (f(w_x))^2 \cdot \frac{e^{\ln(1-q_x)k}}{1-e^{\ln(1-q_x)k}}\\
 &\leq (f(w_x))^2 \cdot \frac{1}{k\ln\left(\frac{1}{1 - q_x}\right)} & \left[\frac{e^{-x}}{1-e^{-x}} \leq \frac{1}{x}\right]\\
 &\leq (f(w_x))^2 \cdot \frac{1}{k q_x} & \left[\ln\left(\frac{1}{1-x}\right) \geq x \text{ since }1-x \leq e^{-x}\right]\\
                               &= \frac{1}{k}(f(w_x))^2 \cdot \frac{1}{p_x} \cdot \frac{p_x}{q_x}\\
  &= \frac{1}{k} f(w_x) \|f(\boldsymbol{w}) \|_1 \cdot \frac{p_x}{q_x} \enspace .
\end{align*}
The upper bound on the variance for a domain $H$ is:
\begin{equation} \label{pqdomainubound:eq}
 \frac{1}{k} \left(\sum_{x\in H} f(w_x)\right) \|f(\boldsymbol{w}) \|_1 \max_{x\in H} \frac{p_x}{q_x}\enspace .
\end{equation}
For any $H$,
the variance bound \eqref{pqdomainubound:eq} is larger
by the benchmark bound \eqref{domainubound:eq} by at most a factor of $h(p,q)$. Hence, a sample of size $kh(p,q)$ according to $\boldsymbol{q}$ emulates a sample of size $k$ according to $\boldsymbol{p}$.
\end{proof}
We emphasize the fact that a sample using $\boldsymbol{q}$
(instead of $\boldsymbol{p}$) gets the same accuracy as sampling using $\boldsymbol{p}$, as long as we increase the sample size accordingly.

\begin{remark} \label{multaccum:rem}
Overhead bounds accumulate multiplicatively:  If
 sampling according to $\boldsymbol{q}$ emulates a sample by
 $\boldsymbol{p}$ and a sample by $\boldsymbol{q'}$ emulates
 a sample by $\boldsymbol{q}$, then a sample by
 $\boldsymbol{q'}$ emulates a sample by $\boldsymbol{p}$
 with overhead $h(q',p) \leq h(q',q) h(q,p)$.
\end{remark}

\begin{remark}[Tightness]
The emulation overhead can be interpreted as providing an upper bound over all possible estimation tasks that the emulated sample could be used for.
This definition of overhead is tight if we wish to transfer
guarantees for all subsets $H$:  Consider a subset that is a single
key $x$ and sample size $k$ such that $q_x \leq p_x \ll 1/k$. The variance when sampling according
to $\boldsymbol{q}$ is  $\approx (f(w_x))^2 /(k q_x) = \frac{1}{k} f(w_x) \frac{f(w_x)}{p_x} \frac{p_x}{q_x} = \frac{1}{k} f(w_x) \|f(\boldsymbol{w})\|_1 \frac{p_x}{q_x}$.  This is a factor of $p_x/q_x$ larger than the variance when sampling according to $p_x = f(w_x)/\|f(\boldsymbol{w})\|_1$, which is
$\approx (f(w_x))^2 /(k p_x) = \frac{1}{k}
f(w_x)\|f(\boldsymbol{w})\|_1$.
\end{remark}

 \paragraph{Overhead for estimating $f$-statistics.}
If we are only interested in estimates of the full $f$-statistics
  $\|f(\boldsymbol{w})\|_1$, the overhead reduces to the expected
  ratio $\E_{x\sim \boldsymbol{p}} [p_x/q_x]$ instead of the maximum ratio.
 \begin{corollary}\label{cor:estimation-overhead}
Let $\boldsymbol{p}$ be the base PPS probabilities for
$f(\boldsymbol{w})$.  Consider weighted sampling of size $k$ according to $\boldsymbol{q}$.  Then,
\[
\Var[\widehat{\|f(\boldsymbol{w})\|_1}] \leq \sum_x \frac{1}{k} f(w_x) \|f(\boldsymbol{w}) \|_1 \cdot \frac{p_x}{q_x} =
\frac{\|f(\boldsymbol{w})\|^2_1}{k} \sum_x p_x \cdot \frac{p_x}{q_x} =
\frac{\|f(\boldsymbol{w})\|^2_1}{k} \E_{x\sim
  \boldsymbol{p}}\left[\frac{p_x}{q_x}\right] .\]
\end{corollary}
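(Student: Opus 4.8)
The plan is to derive this as an averaging refinement of Lemma~\ref{varfactor:lemma}: instead of pulling $\max_x p_x/q_x$ out of the per-key sum, I would keep the factor $p_x/q_x$ attached to each term and recognize the resulting sum as a $\boldsymbol{p}$-expectation.

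First I would start from the per-key variance bound already derived inside the proof of Lemma~\ref{varfactor:lemma}: for weighted sampling of size $k$ according to $\boldsymbol{q}$,
\[ \Var[\widehat{f(w_x)}] \leq \tfrac{1}{k}(f(w_x))^2\Big(\tfrac{1}{q_x}-1\Big) \leq \tfrac{1}{k}(f(w_x))^2\tfrac{1}{q_x} \enspace . \]
Since $\widehat{\|f(\boldsymbol{w})\|_1} = \sum_x \widehat{f(w_x)}$ is exactly the sum estimator for the domain $H=\mathcal{X}$, and the per-key estimators have nonpositive pairwise covariance for the sampling schemes in question (the same fact used to obtain \eqref{domainubound:eq}), I would bound the variance of the sum by the sum of the variances:
\[ \Var[\widehat{\|f(\boldsymbol{w})\|_1}] \leq \sum_x \Var[\widehat{f(w_x)}] \leq \tfrac{1}{k}\sum_x \frac{(f(w_x))^2}{q_x} \enspace . \]

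Then I would substitute $f(w_x)=p_x\,\|f(\boldsymbol{w})\|_1$, which holds by the definition of the pps probabilities $\boldsymbol{p}$, and rewrite
\[ \tfrac{1}{k}\sum_x \frac{(f(w_x))^2}{q_x} = \tfrac{\|f(\boldsymbol{w})\|_1^2}{k}\sum_x \frac{p_x^2}{q_x} = \tfrac{\|f(\boldsymbol{w})\|_1^2}{k}\sum_x p_x\cdot\frac{p_x}{q_x} = \tfrac{\|f(\boldsymbol{w})\|_1^2}{k}\,\E_{x\sim\boldsymbol{p}}\!\left[\frac{p_x}{q_x}\right] \enspace , \]
the last step being just the definition of expectation with respect to the distribution $\boldsymbol{p}$. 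This is the claimed inequality.

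The one place to be careful — and essentially the only nontrivial step — is the passage from per-key variances to the variance of the sum: I would need to confirm that the nonpositive-covariance property cited for \eqref{domainubound:eq} applies verbatim to the actual sampling weights $\boldsymbol{q}$, which it does, since that property is a feature of the sampling scheme (ppswor, priority, or pps-with-replacement) and not of which weights are deemed "ideal". Dropping the $-1$ in $(1/q_x-1)$ only loosens the bound and is harmless. Beyond this bookkeeping there is no genuine obstacle; the content of the corollary is simply that, when the estimation target is the single scalar $\|f(\boldsymbol{w})\|_1$ rather than every subdomain $H$, the worst-case factor $\max_x p_x/q_x$ of Lemma~\ref{varfactor:lemma} is replaced by its $\boldsymbol{p}$-weighted average $\E_{x\sim\boldsymbol{p}}[p_x/q_x]$.
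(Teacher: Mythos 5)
Your argument is correct and is exactly the intended derivation: the paper states the corollary without explicit proof, but it follows the same route as the proof of Lemma~\ref{varfactor:lemma} (per-key variance bound, sum using nonpositive covariance), merely declining to pull $\max_x p_x/q_x$ out of the sum and instead recognizing $\sum_x p_x\,(p_x/q_x)$ as a $\boldsymbol{p}$-expectation.
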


\paragraph{Multi-objective emulation.}
For $h \geq 1$ and a function of frequency $f$, suppose there is a family $F$ of functions such that a
 weighted sample according to $f$ emulates a weighted sample for every
 $g\in F$ with overhead $h$.
A helpful closure property of such $F$ is the following.
  \begin{lemma} \cite{MultiObjective} \label{MOclosure:lemma}
The family $F$ of functions emulated by weighted sampling according to $f$ is closed under nonnegative linear combinations.  That is, if
$\{f_i\}\subset F$ and $a_i \geq 0$, then $\sum_i a_i f_i \in F$.
\end{lemma}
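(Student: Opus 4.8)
The plan is to work with the explicit description of the family $F$ that falls out of Lemma~\ref{varfactor:lemma}. Fix a base function $f$ and an overhead bound $h\ge 1$, and write $q^{(f)}_x = f(w_x)/\|f(\boldsymbol{w})\|_1$ for the pps probabilities of $f$ and $p^{(g)}_x = g(w_x)/\|g(\boldsymbol{w})\|_1$ for those of a candidate objective $g$. By Lemma~\ref{varfactor:lemma}, a weighted sample according to $f$ emulates a weighted sample for $g$ with overhead $h$ whenever $\max_x p^{(g)}_x/q^{(f)}_x \le h$. So I would take $F$ to be exactly the set of (nonnegative-valued) functions $g$ for which this certificate holds, i.e. those $g$ with
\[
 g(w_x) \le \frac{h\,\|g(\boldsymbol{w})\|_1}{\|f(\boldsymbol{w})\|_1}\, f(w_x)\quad\text{for all keys } x
\]
(for the frequency vectors $\boldsymbol{w}$ under consideration), and then show this set is closed under nonnegative linear combinations.

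The core of the argument is a one-line summation. Given $\{f_i\}\subset F$ and scalars $a_i\ge 0$, put $g = \sum_i a_i f_i$. For every key $x$,
\[
 g(w_x) = \sum_i a_i f_i(w_x) \le \sum_i a_i\, \frac{h\,\|f_i(\boldsymbol{w})\|_1}{\|f(\boldsymbol{w})\|_1}\, f(w_x) = \frac{h\, f(w_x)}{\|f(\boldsymbol{w})\|_1} \sum_i a_i \|f_i(\boldsymbol{w})\|_1 .
\]
The remaining step is to note that $\sum_i a_i \|f_i(\boldsymbol{w})\|_1 = \|g(\boldsymbol{w})\|_1$: since the $f_i$ are nonnegative (they serve as sampling weights) and the $a_i$ are nonnegative, the $\ell_1$ norm is additive over the combination, so $\|\sum_i a_i f_i(\boldsymbol{w})\|_1 = \sum_i a_i \|f_i(\boldsymbol{w})\|_1$. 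Substituting back yields $g(w_x)\le \frac{h\|g(\boldsymbol{w})\|_1}{\|f(\boldsymbol{w})\|_1} f(w_x)$ for all $x$, hence $g\in F$, and by Lemma~\ref{varfactor:lemma} a weighted sample according to $f$ emulates a weighted sample for $g$ with overhead $h$.

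I expect the only real subtlety — more a matter of setup than a genuine obstacle — to be fixing the correct description of $F$: one should use the probability-ratio certificate from Lemma~\ref{varfactor:lemma} rather than reasoning directly about the pointwise variance ratios over all subsets $H$, since it is the ratio form that reduces everything to additivity of $\|\cdot\|_1$. The edge cases are harmless and worth a one-line remark: if $f(w_x)=0$ for some $x$, then every $f_i$ with $a_i>0$ entering $g$ must already satisfy $f_i(w_x)=0$ (otherwise $f_i\notin F$), so $g(w_x)=0$ and the displayed inequality still holds trivially; and since the argument is carried out for each fixed $\boldsymbol{w}$, it applies equally whether $F$ is defined relative to a single frequency vector or uniformly over a whole class of them.
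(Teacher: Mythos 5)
The paper cites this lemma from \cite{MultiObjective} without reproducing a proof, so there is no in-text argument to compare against. Your proof is correct and is the natural argument: characterize $F$ via the pointwise pps-ratio certificate from Lemma~\ref{varfactor:lemma} (which, per the tightness remark in Section~\ref{sec:preliminaries}, is equivalent to emulation with overhead $h$ once we insist on guarantees for all subsets $H$), observe that each $f_i\in F$ gives $f_i(w_x)\le h\,\|f_i(\boldsymbol{w})\|_1\,f(w_x)/\|f(\boldsymbol{w})\|_1$, sum with nonnegative coefficients, and use that $\|\cdot\|_1$ is additive over nonnegative combinations so the constants recombine into $\|g(\boldsymbol{w})\|_1$. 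Your handling of the $f(w_x)=0$ edge case is also right; you might add one line excluding the degenerate case $\|g(\boldsymbol{w})\|_1=0$ (all $a_i f_i\equiv 0$), for which pps probabilities are undefined and the claim is vacuous.
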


\subsection{Off-the-Shelf Composable Sampling Sketches} \label{offtheshelf:sec}
We describe known polylogarithmic-size sampling sketches that we use or refer to in this work. These sketches are designed to provide statistical guarantees on the accuracy (bounds on the variance) with respect to specific functions of frequencies and all frequency distributions.  The samples still provide unbiased estimates of statistics with respect to any function of frequency.
We study the estimates provided by these off-the-shelf sketches through the lens of combinations: instead of considering a particular function of frequency and all frequency distributions, we study the overhead of more general frequency-function combinations.

\begin{enumerate}
    \item [(i)] \emph{$\ell_1$ sampling without replacement.}  A PPSWOR
sketch~\cite{CCD:sigmetrics12}  (building on the aggregated
scheme~\cite{Rosen1997a, bottomk:VLDB2008} and related
schemes~\cite{GM:sigmod98,EV:ATAP02}) of  size $k$ performs perfect without replacement sampling
of $k$ keys according to the weights $w_x$. The sketch stores
$k$ keys or hashes of keys. In one pass over the data, we can pick the sample of $k$ keys. In order to compute the inverse-probability estimator, we need to know the exact frequencies of the sampled keys. These frequencies and the corresponding inclusion probabilities can be obtained by a second pass over the dataset.
Alternatively, in a single streaming pass we can
collect partial counts that can be used with appropriate tailored estimators~\cite{CCD:sigmetrics12}.
\item[(ii)] \emph{$\ell_2$ sampling (and generally $\ell_p$ sampling for $p\in [0,2]$) with replacement.}
There are multiple designs based on linear projections~\cite{MonemizadehWoodruff,indyk:stable,AndoniKO:FOCS11,mcgregor2016better}.  Currently the best
asymptotic space bound is $O(\log \delta^{-1} \log^2 n)$ for $0 < p < 2$ (and $O(\log \delta^{-1} \log^3 n)$ for $p = 2$) for a single
sampled key, where $1 - \delta$ is the probability of
success in producing the sample~\cite{JayaramW:Focs2018}.  A with-replacement sample of size
$k$ can then be obtained with a sketch with $O(k \log^2 n \log
\delta^{-1})$ bits.\footnote{These constructions assume that the keys are integers between $1$ and
$n$. Also, the dependence on $\delta$ improves with $k$ but we omit this for brevity.}  We note that on skewed distributions without-replacement sampling is significantly more effective for a fixed sample size.
Recent work \cite{CohenPW:NeurIPS2020} has provided sampling sketches that have size $\tilde{O}(k)$ and perform
without-replacement $\ell_2$ sampling for $p\leq 2$.
\item[(iii)] \emph{Sampling sketches for capping functions~\cite{CapSampling} and
concave sublinear functions~\cite{CohenGeri:NeurIPS2019}}.
We will also consider a multi-objective sample
that emulates all concave sublinear
functions of frequencies with space overhead $O(\log n)$ \cite{CapSampling}.
\end{enumerate}

In our analysis and experiments we compute the overhead of using the above sketches with respect to certain frequencies and functions. Recall that for each target function of frequency, the overhead is computed with respect to the applicable base PPS probabilities $p_x = \frac{f(w_x)}{\|f(\boldsymbol{w})\|_1}$.
Consider a frequency vector $\boldsymbol{w}$ in non-increasing order ($w_{i} \geq
w_{i+1}$).
The base PPS sampling probabilities for $\ell_p$ sampling are simply
$w_i^p/\|\boldsymbol{w}\|^p_p$.
The base PPS probabilities for multi-objective
concave sublinear sampling are $p_i = p'_i/\|\boldsymbol{p'}\|_1$ where
$p'_i := \frac{w_i}{i w_i+ \sum_{j=i+1}^n w_j}$. These samples emulate
sampling for all concave-sublinear functions with overhead
$\|\boldsymbol{p'}\|_1$.

\subsection{Datasets} \label{datasets:sec}
In our experiments, we use the following datasets:
\begin{itemize}
    \item AOL \cite{AOL}: A log of search queries collected over three months in 2006. For each query, its frequency is the number of lines in which it appeared (over the entire 92 days).
    \item CAIDA \cite{CAIDA}: Anonymous passive traffic traces from CAIDA's equinix-chicago monitor. We use the data collected over one minute (2016/01/21 13:29:00 UTC), and count the number of packets for each tuple (source IP, destination IP, source port, destination port, protocol).
    \item Stack Overflow (SO) \cite{StackOverflow}: A temporal graph of interactions between users on the Stack Overflow website. For each node in the graph, we consider its weighted in degree (total number of responses received by that user) and its weighted out degree (total number of responses written by that user).
    \item UGR \cite{UGR}: Real traffic information collected from the network of a Spanish ISP (for network security studies). We consider only one week of traffic (May 2016 Week 2). For a pair of source and destination IP addresses, its frequency will be the number packets sent between these two addresses (only considering flow labeled as ``background'', not suspicious activity).
\end{itemize}

\section{The Advice Model}\label{sec:advice}

In this section, we assume that in addition to the input, we are provided with oracle access to an
``advice'' model.  When presented with a key $x$, the advice model returns a
 prediction $a_x$ for the total frequency of
$x$ in the data.   For simplicity, we assume that the advice model consistently returns the same prediction for all queries with the same key.
This model is similar to the model used in \cite{hsuIKV:ICLR2019}.

At a high level, our sampling
sketch takes size parameters $(k_h,k_p,k_u)$, maintains a set of at most $k_h+k_p+k_u$ keys, and collects the exact frequencies $w_x$ for these stored keys.
The primary component of the sketch is a weighted sample by advice of size $k_p$. Every time an element with key $x$ arrives, we query the advice model and get $a_x$. The weighted sample by advice is a sample of $k_p$ keys according to the weights $f(\boldsymbol{a})$.
Our sketch stores keys according to two additional criteria in order
to provide robustness to the prediction quality of the advice:
\begin{itemize}
\item
The top $k_h$ keys according to the advice model. This provides tolerance to inaccuracies in
the advice for these heaviest keys.  Since these keys are included with probability $1$, they will not contribute to the error.
\item
A uniform sample of $k_u$ keys.  This allows keys that are ``below
 average'' in their contribution to $\|f(\boldsymbol{w})\|_1$ to be represented appropriately in the sample, regardless of the accuracy of the advice.  This provides robustness to the accuracy of the advice
 on these very infrequent keys and ensures they are not undersampled.  Moreover, this ensures that all
 active keys ($w_x>0$), including those with potentially no advice ($a_x=0$), have a positive probability of being sampled. This is necessary for unbiased estimation.
\end{itemize}

We provide an unbiased estimator that smoothly combines the different sketch components and provides the following guarantees.
\begin{theorem} \label{sampleA:lemma}
   Suppose the advice model is such that for some $c_p,c_u \geq 0$ and $h\geq 0$,
   all keys $x$ that are active ($w_x>0$) and
   not in the $h$ largest advice
values of active keys
   ($a_x < \{a_y \mid w_y>0 \}_{(n-h+1)}$)
   satisfy
\[\frac{f(w_x)}{\|f(\boldsymbol{w})\|_1} \leq  \max\{
  c_p \frac{f(a_x)}{\| f(\boldsymbol{a})\|_1},  c_u \cdot \frac{1}{n} \}\ .\]
Then for all $k\geq 1$, the estimates from a sample-by-advice sketch with $(k_h,k_p,k_u) = (h, \lceil k c_p \rceil+2, \lceil k c_u\rceil+2)$
satisfy the variance bound~\eqref{domainubound:eq} for all domains $H$.
\end{theorem}
The implementation details appear below in Section~\ref{sec:advice-implementation} and the proof appears in Section~\ref{sec:advice-analysis}.

In particular,
if our advice is approximately accurate, say $f(w_x)
\leq f(a_x) \leq c_p \cdot f(w_x)$, the overhead when sampling by advice is $c_p$.
\begin{corollary}
Let $f$ be such that
$f(w_x) \leq f(a_x) \leq c_p f(w_x)$. Then for all $k\geq 1$, the estimates from a sample-by-advice sketch with
sample size
$(k_h,k_p,k_u) = (0,\lceil k c_p\rceil+2,0)$ satisfy the variance bound~\eqref{domainubound:eq} for all domains $H$.
\end{corollary}

\subsection{Implementation}\label{sec:advice-implementation}

The pseudocode for our sampling-by-advice sketch and computing the respective estimators $\widehat{f(w_x)}$ is provided in Algorithms~\ref{alg:sampling-based} and \ref{alg:sampling-based-est}.
Since the advice $a_x$ predicting the total frequency of a key is available with each occurrence of the key,  we can implement the
weighted sample by advice $f(a_x)$
using schemes designed for
aggregated data (a model where each key occurs once with its full weight) such as \cite{Cha82,Ohlsson_SPS:1998,DLT:jacm07,Rosen1997a, bottomk:VLDB2008,varopt_full:CDKLT10}.
Some care (e.g., using a
hash function to draw the random seeds) is needed because keys may occur in multiple data elements.  In the pseudocode, we show how use a bottom-$k$ implementation of either PPSWOR or priority sampling. Since the sampling sketches are bottom-$k$ samples, they also support merging (pseudocode not shown). The pseudocode also integrates the uniform and by-advice samples to avoid duplication (keys that qualify for both samples are stored once).

\begin{algorithm2e}\caption{Sample by Advice (Data Processing)}\label{alg:sampling-based}
\DontPrintSemicolon
\KwIn{A stream of updates, advice model $a$, $k_h,k_p,k_u$ (sample
  size for heaviest, by-advice, and uniform)}
\KwOut{A sampling sketch for $f(\boldsymbol{w})$}
{\bf Initialization:}\;
\Begin{
    Draw a hash function $h$ such that $h(x) \sim \mathcal{D}$ independently for all $x$\tcp*{$\mathcal{D} = \Exp(1)$ for PPSWOR, $\mathcal{D} =  U[0,1]$ for priority}
    Create empty sample $S := (S.h, S.pu)$ \tcp*{$S.h$ stores top $k_h$ keys by advice; $S.pu$ is a bottom-$k$ sketch for a combined weighted-by-advice and uniform sample}\;
}

{\bf Process an update $(x,\Delta)$ with prediction $a_x$:}\;
\Begin{
    \If(\tcp*[h]{$x$ is in (any component of) the sample}){$x \in S$}
    {$w_x \gets w_x+\Delta$ \tcp*{Update the count of $x$}}
    \Else{
      \If(\tcp*[h]{$x$ has a top-$k_h$ advice value}){$|S.h|< k_h$ {\bf or}  $a_x > \min_{y\in S.h} a.y$ }{$w_x
      \gets \Delta$ \;
      insert $(x,w_x)$ to
      $S.h$ \;
             \If{$|S.h|=k_h+1$}{Eject $y \gets \arg\min_{z\in S.h} a_z$ from $S.h$ \tcp*{eject key with smallest advice in $S.h$}
             Process $(y,w_y)$ by the by-advice$+$uniform sampling sketch $S.pu$\;
             }
       }
    \Else{process $(x,\Delta)$ by the by-advice$+$uniform sampling sketch $S.pu$}
}
}
    \tcp{Internal subroutine: process update $(y,\Delta)$ by by-advice$+$uniform sampling sketch $S.pu$:}
    \Begin{
      $r_y \gets \frac{h(y)}{f(a_y)}$ \tcp*{Compute by-advice seed value of key $y$}
      \If(\tcp*[h]{$y$ has one of $k_p$ smallest $r_y$ and qualifies for by-advice sample}) {$r_y < \{r_z \mid z\in S.pu\}_{(k_p)}$}{$w_y\gets \Delta$ \;
      Insert $(y, w_y)$ to $S.pu$}
      \Else{
      \If(\tcp*[h]{$y$ has one of $k_u$ smallest $h(y)$ and qualifies for uniform sample}){$h(y)  < \{h(z) \mid z\in S.pu\}_{(k_u)}$}
      {$w_y\gets \Delta$ \;
      Insert $(y, w_y)$ to $S.pu$ together with $r_y$ and $h(y)$}
      }
      \ForEach{$z\in S.pu$ such that
            $h(z)  > \{h(z') \mid z'\in S.pu\}_{(k_u)}$ \textbf{ and } $r_z > \{r_{z'} \mid z'\in S.pu\}_{(k_p)}$}{Remove $z$ from $S.pu$}
      }

\end{algorithm2e}
\begin{algorithm2e}\caption{Sample by Advice (Estimator Computation)}\label{alg:sampling-based-est}
\DontPrintSemicolon
\KwIn{A by-advice sampling sketch for $f(\boldsymbol{w})$ with parameters $k_h,k_p,k_u$}
\KwOut{Sparse representation of $f(\boldsymbol{w})$ and estimate of $\|f(\boldsymbol{w})\|_1$}
\ForEach{$x$ in $S_h$}{$\widehat{f(w_x)} \gets f(w_x)$}
\ForEach(\tcp*[h]{keys stored in uniform/by-advice samples}){$x\in S.pu$}
{
  $\tau_p \gets \{r_{z} \mid z \in S.pu\setminus\{x\}\}_{(k_p - 1)}$\tcp*{The $(k_p - 1)^{th}$ smallest seed of a key other than $x$}
  $\tau_u \gets \{h(z) \mid z\in S.pu\setminus\{x\} \}_{(k_u - 1)}$ \tcp*{The $(k_u - 1)^{th}$ smallest hash value of a key other than $x$}
\If(\tcp*[h]{key $x$ is strictly included in the uniform or by-advice samples}){ $h(x) < \tau_u \text{ {\bf or }} r_x < \tau_p$}
{
$p_x \gets \Pr_{h}[h(x) < \max\{f(a_x)\tau_p, \tau_u\}]$
\tcp*{For PPSWOR $1-e^{-\max\{\tau_u, f(a_x)\tau_p\}}$; For priority $\min\{\max\{\tau_u, f(a_x)\tau_p\},1\}$}
$\widehat{f(w_x)} \gets  \frac{f(w_x)}{p_x}$\;
}
}
Return $\{(x,\widehat{f(w_x)})\}$ for all $x$ assigned with $\widehat{f(w_x)}$ (sparse representation of $f(\boldsymbol{w})$); The sum of the assigned $\widehat{f(w_x)}$ as an estimate of $\|f(\boldsymbol{w})\|_1$.
\end{algorithm2e}

\subsection{Analysis}\label{sec:advice-analysis}
We first establish that the sampling sketch returns the exact frequencies for sampled keys.
\begin{lemma}
The frequency $w_x$ of each key $x$ in the final sample is accurate.
\end{lemma}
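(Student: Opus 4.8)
The plan is to prove the claim by induction over the prefix of the update stream processed so far, maintaining the invariant that every key currently stored in any component of the sketch of Algorithm~\ref{alg:sampling-based} carries exactly the sum of the update values seen for it up to that point. The key structural facts that make this work are (i) the per-key seed value $r_y = h(y)/f(a_y)$ and hash value $h(y)$ are deterministic functions of the key, since $h$ is fixed once at initialization, and (ii) the admission thresholds used to decide membership move monotonically in one direction. Together these imply that a key which leaves the sketch can never re-enter, so that each stored key has been present continuously since its first occurrence, with all of its updates accumulated.

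First I would establish the monotonicity of the thresholds. For $S.h$, let $\theta_h$ be $-\infty$ while $|S.h|<k_h$ and $\min_{y\in S.h}a_y$ once $|S.h|=k_h$; since $|S.h|$ never drops below $k_h$ after reaching it, and each overflow replaces the current minimum-advice key by one of weakly larger advice, $\theta_h$ is non-decreasing. For the combined ppswor$+$uniform sketch $S.pu$, let $R_t$ be the set of keys that have been routed to the $S.pu$-processing routine by time $t$ (through the else-branch or through ejection from $S.h$); $R_t$ is non-decreasing. A key $z$ is strictly in the ppswor (resp.\ uniform) component precisely when $r_z$ (resp.\ $h(z)$) is among the $k_p$ (resp.\ $k_u$) smallest values over $R_t$, and since $R_t$ only grows while these values are fixed per key, ``$r_z$ among the $k_p$ smallest of $R_t$'' and ``$h(z)$ among the $k_u$ smallest of $R_t$'' can only change from true to false. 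Ejection from $S.pu$ requires both to be false, hence is permanent.

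Combining: if a key $x$ is ever routed to the $S.pu$ routine, then either it failed the $S.h$-admission test at that moment, or it was inserted into $S.h$ and immediately ejected as the argmin; in both cases $a_x\le\theta_h$ from that point on, so by monotonicity $x$ never again satisfies $a_x>\theta_h$ and never re-enters $S.h$. With the previous paragraph, once $x$ is absent from both $S.h$ and $S.pu$ it stays absent. Therefore a key $x$ present in the final sample must have been inserted on its first occurrence (on the first occurrence $x$ is placed in $S.h$, placed in $S.pu$, or dropped, and a dropped key never returns), it undergoes at most one $S.h\!\to\!S.pu$ transition (there is no move in the other direction), and it survives every later occurrence. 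On the first occurrence $w_x$ is initialized to the first update value; on an $S.h\!\to\!S.pu$ ejection the routine is invoked as ``process $(x,w_x)$'' with the current accumulated $w_x$, so the insertion into $S.pu$ preserves it; and each later occurrence of $x$ takes the ``$x\in S$'' branch and performs $w_x\gets w_x+\Delta$. Hence $w_x$ equals the sum of all update values of $x$, i.e., its true frequency.

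The main obstacle is the bookkeeping around the $S.h\!\to\!S.pu$ handoff: one must define the $S.pu$ thresholds over the set of keys actually routed there (not over first occurrences) so that the monotonicity argument survives keys that arrive ``late'' via ejection from $S.h$, and one must verify that the ejection invokes the $S.pu$ routine with the accumulated count rather than a fresh $\Delta$, so no frequency mass is lost in the transition. Everything else is a routine induction on the stream.
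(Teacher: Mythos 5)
Your proof is correct and takes essentially the same approach as the paper: both rest on the observation that once a key is absent from the sketch it can never re-enter (the paper phrases this via ``$k$ keys with smaller seeds will permanently block $x$,'' you phrase it via monotonicity of the admission thresholds, which is the same fact). Your treatment is more detailed than the paper's quite terse proof, in particular spelling out the $S.h\to S.pu$ handoff with the accumulated count and the reason the $S.pu$ thresholds behave as if taken over the cumulative set $R_t$ of routed keys, both of which the paper dismisses with ``the argument ... is similar.''
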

\begin{proof}
Consider first the sample of size  that is weighted by the advice $f(a_x)$ (without the other two components: the uniform sample and the keys with largest advice values). Note that if a key enters the sample when the first element with that key is processed and remains stored, its
count will be accurate (we account for all the updates involving
that key). Since the prediction $a_x$ is consistent (i.e., the prediction
$a_x$ is the same in all updates involving $x$), the seed of $x$ is
the same every time $x$ appears. For $x$ to not enter the sample on
its first occurrence or to be removed at any point, there must be other $k$ keys in the sample with
seed values lower than that of $x$. If such keys exist, $x$ is not in the final sample.
The argument for the $k_h$
top advice keys and for the $k_u$ uniformly-sampled keys is similar.
\end{proof}

\begin{proof}[Proof of Theorem~\ref{sampleA:lemma}]
Since the estimators have nonpositive covariance, it suffices to establish the upper bound
\[\Var[\widehat{f(w_x)}] \leq
f(w_x)\|f(\boldsymbol{w})\|_1
\max\{
  \frac{c_p}{k_p-2},
  \frac{c_u}{k_u-2}
  \}
  \]
for every key $x$.

A key $x$ with one of the top $h$ advised frequencies ($a_x$ values) is included with probability $1$ and has
$\Var[\widehat{f(w_x)}] = 0$. For those keys, the claim trivially holds.
Otherwise, recall that we assume that for some
$c_p, c_u \geq 0$,
\[\frac{f(w_x)}{\|f(\boldsymbol{w})\|_1} \leq  \max\{
  c_p \frac{f(a_x)}{\| f(\boldsymbol{a})\|_1},  c_u \cdot \frac{1}{n} \}\ .\]

Since the estimates are unbiased,
\[\Var[\widehat{f(w_x)}] = \E[\widehat{f(w_x)}^2] - (f(w_x))^2 = (f(w_x))^2 \E_{p_x}[(1/p_x -1)],\]
where $p_x$ is as computed by Algorithm~\ref{alg:sampling-based-est}.  Now note that $p_x = \max\{p'_x, p''_x\}$, where $p'_x$ is the probability $x$ is included in a by-advice sample of size $k_p$ and $p''_x$ is the probability it is included in a uniform sample of size $k_u$. Each of these samples is a bottom-$k$ sample but with different sampling weights: for each key $x$, the sampling weight in the by-advice sample is $f(a_x)$ and in the uniform sample is $1$.
We bound the variance for each of the two samples, and obtain that the variance in the combined sample is the minimum of the two bounds.

Generally, given a PPSWOR sample or in priority sampling (as in our algorithm) with sampling weights $\boldsymbol{v}$, we get the following variance bound on the estimate for $v_x$:
\[
\Var[\widehat{v_x}] \leq \frac{1}{k-2}v_x\| \boldsymbol{v}\|_1.
\]
If we compute an estimate for $f(w_x)$ instead, we get
\[
\Var[\widehat{f(w_x)}] = \Var\left[\frac{f(w_x)}{v_x} \cdot \widehat{v_x}\right] = \left(\frac{f(w_x)}{v_x}\right)^2 \Var[\widehat{v_x}] \leq \left(\frac{f(w_x)}{v_x}\right)^2 \frac{v_x\| \boldsymbol{v}\|_1}{k-2} = \frac{(f(w_x))^2}{k-2} \cdot \frac{\|\boldsymbol{v}\|_1}{v_x}\ .
\]
We use this bound for each of the two samples.

The uniform sample is equivalent to setting $v_x = 1$ for all keys. Using the bound above, the variance from a uniform sample of size $k_u$ is bounded by
$\frac{1}{k_u-2} n (f(w_x))^2$.
If $\frac{f(w_x)}{\| f(\boldsymbol{w})\|_1} \leq c_u \cdot \frac{1}{n}$, we substitute
$f(w_x) \leq \| f(\boldsymbol{w})\|_1 c_u \cdot \frac{1}{n}$ and obtain
\begin{align*}
\Var[\widehat{f(w_x)}] &\leq \frac{1}{k_u-2} n (f(w_x))^2\\
&\leq \frac{1}{k_u-2} n f(w_x) \|f(\boldsymbol{w})\|_1 c_u \cdot \frac{1}{n} \\
&= \frac{c_u}{k_u-2}f(w_x) \|f(\boldsymbol{w})\|_1\ .
\end{align*}
The weight $v_x$ of key $x$ in the by-advice sample of size $k_p$ is $f(a_x)$, so the variance is bounded by
$\frac{1}{k_p-2}   \|f(\boldsymbol{a})\|_1 \frac{f(w_x)^2}{f(a_x)}$.
If $\frac{f(w_x)}{\|f(\boldsymbol{w})\|_1} \leq c_p \frac{f(a_x)}{\|f(\boldsymbol{a})\|_1}$, we similarly substitute and obtain that
\[
\Var[\widehat{f(w_x)}] \leq \frac{c_p}{k_p-2}  f(w_x) \|f(\boldsymbol{w})\|_1\enspace .
\]
Together, since at least one of the two cases must hold, we get that
\[\Var[\widehat{f(w_x)}] \leq
f(w_x)\|f(\boldsymbol{w})\|_1
\max\{
  \frac{c_p}{k_p-2},
  \frac{c_u}{k_u-2}
  \} \enspace .
  \]
The bound~\eqref{domainubound:eq} follows from the choice of $(k_h, k_p, k_u)$ and the nonpositive covariance of the estimators for different keys.
\end{proof}

\subsection{Experiments}
We evaluate the effectiveness of ``sampling by advice''  for estimating the frequency moments $\|\boldsymbol{w}\|_p^p$ with $p=3,7,10$ on
datasets from~\cite{AOL,StackOverflow} (the datasets are described in Section~\ref{datasets:sec}).   We use the following advice models:
\begin{itemize}
    \item AOL~\cite{AOL}: We use the same predictions as in \cite{hsuIKV:ICLR2019}, which were the result training a deep learning model on the queries from the first five days. We use the prediction to estimate frequency moments on the queries from the 51st and 81st days (after removing duplicate queries from multiple clicks on results).
    \item Stack Overflow~\cite{StackOverflow}: We consider two six-month periods, 1/2013--6/2013 and 7/2013--12/2013. We estimate the in and out degree moments only on the data from 7/2013--12/2013, where the advice for each node is its exact degree observed in earlier data (the previous six-month period).
\end{itemize}

Some representative results are reported in Figure~\ref{advice:fig}. Additional results are provided in Appendix~\ref{actualmore:sec}.
The results reported for sampling by advice are with $k_h=0$ and two choices of balance between the by-advice and the uniform samples: $k_p=k_u$ and $k_u = 32$. (The by-advice sample was implemented using PPSWOR as in Algorithm~\ref{alg:sampling-based}.)   We also report the performance of PPSWOR ($\ell_1$
sampling without replacement), $\ell_2$ sampling (with and without
replacement), and the benchmark upper bound for ideal PPS sampling according to $f(w)=w^p$, which is $1/\sqrt{k}$.

\begin{figure}[t]
\centering
  \includegraphics[width=0.49\textwidth]{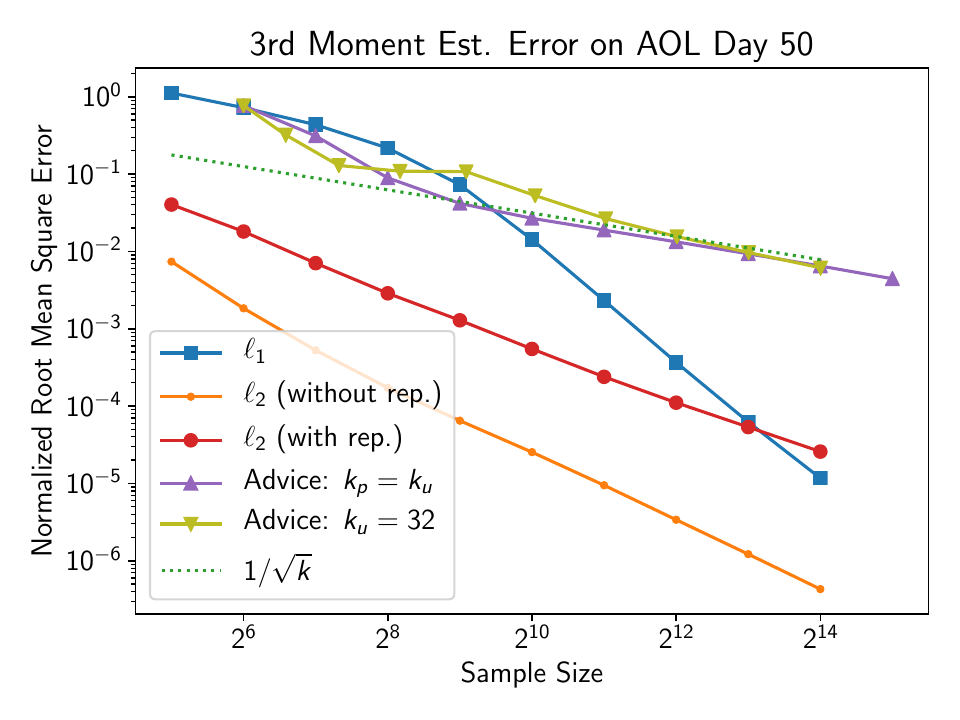}
  \includegraphics[width=0.49\textwidth]{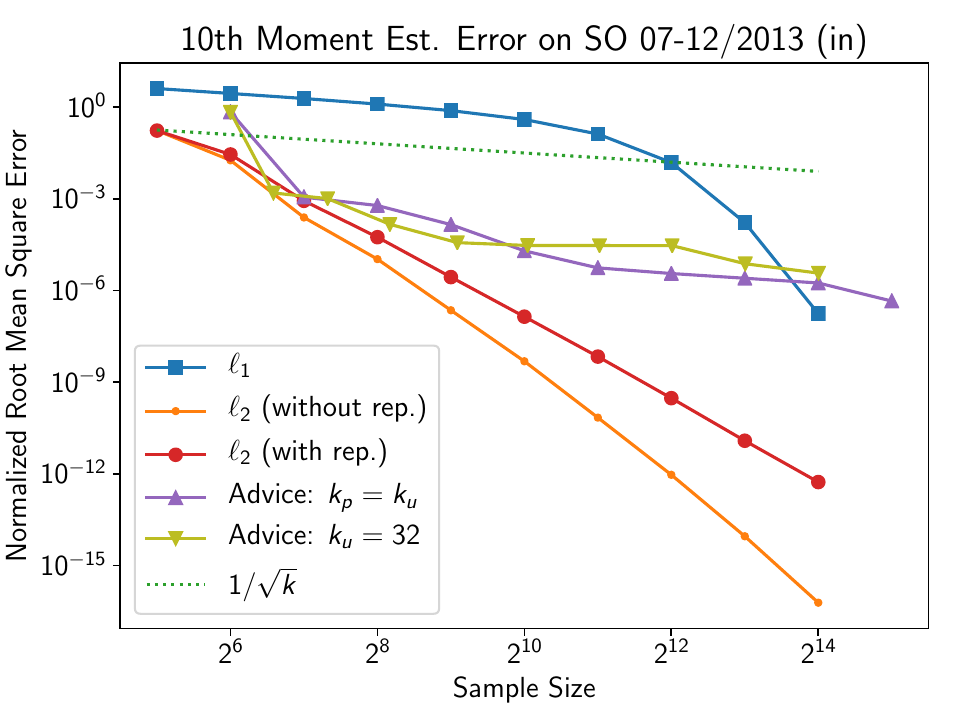}
\caption{Estimating the third moment on the AOL dataset (with learned
  advice) and the tenth moment on the Stack Overflow dataset (with past frequencies as advice).}
\label{advice:fig}
\end{figure}

\paragraph{Error estimation.} Our sampling scheme provides unbiased estimators for $\|f(\boldsymbol{w})\|_1$, where $f(w) = w^p$. We consider the normalized root mean square error (NRMSE), which for unbiased estimators is the same as the coefficient of variation (CV), the ratio of the standard deviation to the mean:
\[\frac{\left(\Var[\widehat{\|f(\boldsymbol{w})\|_1}]\right)^{1/2}}{\|f(\boldsymbol{w})\|_1} \enspace.\]
A simple way to estimate the variance is to use the
empirical squared error over multiple runs:  We take the average of $(\widehat{\|f(\boldsymbol{w})\|_1} - \|f(\boldsymbol{w})\|_1)^2$ over runs and apply a square root.  We found that 50 runs were not sufficient for an accurate estimate with our sample-by-advice methods.  This is because of keys that had relatively high frequency and low advice, which resulted in low inclusion probability and high contribution to the variance.  Samples that included these keys had higher estimates of the statistics than the bulk of other samples and often the significant increase was attributed to one or two keys.  This could be remedied by significantly increasing the number of runs we average over.
We instead opted to use different and more accurate estimators for the variance of the by-advice statistics and, for consistency, the baseline with and without-replacement schemes.

For with-replacement schemes we computed an upper bound on the variance (and hence the NRMSE) as follows.  The inclusion
   probability of a key $x$ in a sample of size $k$ is
   \[ p'_x := 1-(1-p_x)^k \] where
   $p_x$ is the probability to be selected in one sampling step. That is,  $p_x = w_x/\|\boldsymbol{w}\|_1$ for with-replacement $\ell_1$ sampling and $p_x = w_x^2/\|\boldsymbol{w}\|_2^2$ for with-replacement $\ell_2$ sampling.
   We can then compute the per-key variance of our estimator \emph{exactly} for each key $x$, which is
   $(1/p'_x-1)(f(w_x))^2$. Since estimates for different keys have nonpositive covariance, the variance of our estimate of the statistics $\|f(\boldsymbol{w})\|_1$
   is at most
   \[\Var[\widehat{\|f(\boldsymbol{w})\|_1}] \leq  \sum_x (1/p'_x-1)(f(w_x))^2 \enspace.\]

   For without-replacement schemes (sampling-by-advice and the without-replacement reference methods) we apply a more accurate estimator over the same set of $50$ runs.  For each ``run''  and each key $x$ (sampled or not) we consider all the possible samples where the randomization (i.e., seeds) of all keys $y\not= x$ is fixed as in the ``run.'' These include samples that include and do not include $x$.  We then compute the inclusion probability $p'_x$ of key $x$ under these conditions (fixed randomization for all other keys). The contribution to the variance due to this set of runs is
   $(1/p'_x-1)(f(w_x))^2$.
   We sum the estimates $(1/p'_x-1)(f(w_x))^2$ over all keys and take the average of the sums over runs as our estimate of the variance. The inclusion probability $p'_x$ is computed as follows.

   For bottom-$k$ sampling (in this case, PPSWOR by $w^q$ for $q=1,2$), recall that we compute random {\em seed} values to keys of the form $r_x/w^q_x$, where $r_x\sim \mathcal{D}$ are independent. The sample includes the $k-1$ keys with lowest seed values.   The inclusion probability is computed with respect to a threshold that is defined to be the $(k-1)$-th smallest ``seed'' value of other keys $\tau_x \gets \{\text{seed}(y) | y\not= x\}_{(k-1)}$.   For keys in the sample, this is the $k$-th smallest seed overall. For keys not in the sample the applicable $\tau_x$ is the $(k-1)$-th smallest seed overall.  We get
   $p'_x = \Pr_{r\sim \mathcal{D}}[r_x/w^q_x \leq \tau_x]$ (a different $p'_x$ is obtained in each run).

The calculation for the by-advice sampling sketch is as in the estimator in Algorithm~\ref{alg:sampling-based}, except that it is computed for all keys $x$.

\paragraph{Discussion.} We observe that for the third moment, sampling by advice did not perform significantly better than PPSWOR (and sometimes performed even worse). For higher moments, however, sampling by advice performed better than PPSWOR when the sample size is small.
  With-replacement $\ell_2$ sampling was more accurate than advice  (without-replacement $\ell_2$ sampling performs best).
  Our analysis in the next section explains the perhaps surprisingly good performance of $\ell_1$ and $\ell_2$ sampling schemes.

\section{Frequency-Function Combinations}\label{sec:f-f-comb}
In this section, we study the case of inputs that come from restricted families of
frequency distributions~$W$.  Restricting the family $W$ of possible
inputs allows us to extend the family $F$ of functions of frequencies
that can be efficiently emulated with small overhead.

Specifically,  we will consider sampling sketches and corresponding
{\em combinations} $(W,F,h)$  of frequency
vectors $W$, functions of frequencies $F$, and overhead $h \geq 1$, so
that
for every frequency distribution $w\in W$ and function $f\in F$, our
sampling sketch emulates
a weighted sample of
$f(\boldsymbol{w})$ with overhead $h$.  We will say that
our sketch {\em supports} the combination $(W,F,h)$.

Recall that emulation with overhead $h$ means that a sampling sketch
of size $h \epsilon^{-2}$ (i.e., holding that number of keys or hashes of keys)
provides estimates with
NRMSE $\epsilon$ for $\|f(\boldsymbol{w})\|_1$ for {\em any} $f\in F$ and
that these estimates are concentrated in a Chernoff-bound sense.
Moreover, for any $f\in F$ we can estimate statistics of the form $\sum_x L_x f(w_x)$
with the same guarantees on the accuracy provided by a dedicated weighted sample according to $f$.

We study combinations supported by the off-the-shelf sampling schemes that were described in Section~\ref{offtheshelf:sec}.
We will use the notation $\boldsymbol{w}=\{w_i\}$ for dataset
frequencies,  where $w_i$ is the frequency of the $i$-th most frequent key.


\subsection[Emulating an l\_p Sample by an l\_q Sample]{Emulating an $\ell_p$ Sample by an $\ell_q$ Sample}
     We express the overhead of emulating $\ell_p$ sampling by
     $\ell_q$ sampling ($p \geq q$)
in terms of the properties of the frequency distribution.
Recall that $\ell_p$ sampling (and estimating the
       $p$-th
       frequency moment)
can be implemented with polylogarithmic-size sketches for $p\leq 2$
but requires polynomial-size sketches in the worst case when $p>2$.
               \begin{lemma} \label{genratio:lemma}
     Consider a dataset with frequencies $\boldsymbol{w}$ (in non-increasing order).
     For $p\geq q$, the overhead of emulating $\ell_p$ sampling by
     $\ell_q$ sampling is bounded by
     \begin{equation}\label{overheadpq:eq}
     \frac{\left\| \frac{\boldsymbol{w}}{w_1}\right\|^q_q}{\left\| \frac{\boldsymbol{w}}{w_1} \right\|^p_p }  .
    \end{equation}
     \end{lemma}
     \begin{proof}
     The sampling probabilities for key $i$ under $\ell_p$ sampling
     and $\ell_q$ sampling are $\frac{w_i^p}{\|w\|_p^p}$ and
     $\frac{w_i^q}{\|w\|_q^q}$, respectively. Then, the overhead of
     emulating $\ell_p$ sampling by $\ell_q$ sampling is
     \[
     \max_i{\tfrac{w_i^p/\|w\|_p^p}{w_i^q/\|w\|_q^q}} = \max_i{w_i^{p-q} \cdot \tfrac{\|w\|_q^q}{\|w\|_p^p}} = w_1^{p-q} \cdot \tfrac{\|w\|_q^q}{\|w\|_p^p} = \tfrac{\left\| \tfrac{\boldsymbol{w}}{w_1}\right\|^q_q}{\left\| \tfrac{\boldsymbol{w}}{w_1} \right\|^p_p }.
     \]
     \end{proof}
We can obtain a (weaker) upper bound on the overhead, expressed only
in terms of $q$, that applies to all $p\geq q$:
     \begin{corollary} \label{boundq:coro}
     The overhead of emulating $\ell_p$ sampling using $\ell_q$ sampling (for any $p\geq q$) is at most $\left\| \frac{\boldsymbol{w}}{w_1} \right\|^q_q$.
     \end{corollary}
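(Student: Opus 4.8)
The plan is to derive this directly from Lemma~\ref{genratio:lemma}. That lemma already bounds the overhead of emulating $\ell_p$ sampling by $\ell_q$ sampling by
\[
\frac{\left\| \frac{\boldsymbol{w}}{w_1}\right\|^q_q}{\left\| \frac{\boldsymbol{w}}{w_1} \right\|^p_p } ,
\]
so it suffices to show that the denominator is at least $1$, and crucially that this lower bound is \emph{independent of $p$}. First I would note that since $\boldsymbol{w}$ is listed in nonincreasing order, $w_1 = \max_i w_i$ (and $w_1 > 0$ whenever there is at least one active key), so every coordinate of the normalized vector $\boldsymbol{w}/w_1$ lies in $[0,1]$ and its first coordinate equals exactly $1$.

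Next I would observe that $\left\| \frac{\boldsymbol{w}}{w_1} \right\|^p_p = \sum_i (w_i/w_1)^p \geq (w_1/w_1)^p = 1$, since all summands are nonnegative and the $i=1$ term alone contributes $1$. Substituting this into the fraction from Lemma~\ref{genratio:lemma} gives an overhead bound of at most $\left\| \frac{\boldsymbol{w}}{w_1} \right\|^q_q$, which is the claimed estimate, and since it no longer mentions $p$ it applies simultaneously to every $p \geq q$.

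There is essentially no obstacle here; the only subtlety to flag is that the bound is required to hold \emph{uniformly} over all $p \geq q$, which is precisely why one drops the $p$-dependent denominator $\left\| \frac{\boldsymbol{w}}{w_1} \right\|^p_p$ rather than retaining it — keeping it would yield the sharper but $p$-specific estimate of Lemma~\ref{genratio:lemma}. This weaker, uniform bound is the natural quantity to state when a single $\ell_q$ sample is meant to emulate the whole family $\{w^p : p \ge q\}$ at once.
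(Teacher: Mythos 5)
Your proof is correct and essentially mirrors the paper's own argument: both start from Lemma~\ref{genratio:lemma} and drop the $p$-dependent denominator by observing $\left\| \boldsymbol{w}/w_1 \right\|^p_p \geq 1$. The only cosmetic difference is that you justify the lower bound $1$ by isolating the $i=1$ term, whereas the paper simply asserts the normalized norm is non-increasing in $p$ and at least $1$; the two justifications are interchangeable.
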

     \begin{proof}
     For any set of frequencies $\boldsymbol{w}$, the normalized norm $\left\| \frac{\boldsymbol{w}}{w_1} \right\|^p_p$ is non-increasing with $p$ and is at least $1$.
     Therefore, the overhead \eqref{overheadpq:eq} is
           \[ \left\| \frac{\boldsymbol{w}}{w_1}\right\|^q_q \bigg/ \left\| \frac{\boldsymbol{w}}{w_1} \right\|^p_p \leq \left\| \frac{\boldsymbol{w}}{w_1} \right\|^q_q \enspace . \]
     \end{proof}
\begin{remark} Emulation as described above works when $p\geq q$.  When $q>p$, the maximum in
     the overhead bound (see proof of Lemma~\ref{genratio:lemma})  is incurred on the least frequent key, with
     frequency $w_n$.  We therefore get a bound of
     $\left\| \frac{\boldsymbol{w}}{w_n}\right\|^q_q / \left\|
         \frac{\boldsymbol{w}}{w_n} \right\|^p_p $ and Corollary~\ref{boundq:coro} does not apply.
\end{remark}

\subsection{Frequency Distributions with a Heavy Hitter}
 We show that for distributions with an $\ell_q$ heavy hitter,
 $\ell_q$ sampling emulates $\ell_p$ sampling for all $p\geq q$ with a
 small overhead.
\begin{definition}
 Consider  frequencies $\boldsymbol{w}$.
An $\ell_q$ $\phi$-heavy hitter is defined to be a key such that $w_i^q \geq
\phi \cdot \| \boldsymbol{w} \|^q_q$.\footnote{Another definition that is common in the literature uses $w_i \geq
\phi \cdot \| \boldsymbol{w} \|_q$ instead of $w_i^q \geq
\phi \cdot \| \boldsymbol{w} \|^q_q$.}
\end{definition}
We can now restate Corollary~\ref{boundq:coro}  in terms of  a presence of
a heavy hitter:
\begin{corollary} \label{hh:coro}
Let $\boldsymbol{w}$ be a frequency vector with a $\phi$-heavy
hitter under $\ell_q$. Then for $p\geq q$, the overhead of using
an $\ell_q$ sample to emulate an $\ell_p$ sample is at most $1/\phi$.
\end{corollary}
\begin{proof}
  If there is an $\ell_q$ $\phi$-heavy hitter, then the most frequent
  key (the key with frequency $w_1$) must be a $\phi$-heavy hitter.
From the definition of a heavy hitter, $\left\| \frac{\boldsymbol{w}}{w_1}\right\|^q_q \leq \frac{1}{\phi}$, and we get the desired bound on the overhead.
\end{proof}
We are now ready to specify combinations $(W,F,h)$  of frequency
vectors $W$, functions of frequencies $F$, and overhead $h \geq 1$ that
are supported by $\ell_q$ sampling.
\begin{theorem} \label{combinationsHH:thm}
  For any $q>0$ and $\phi\in (0,1]$, an $\ell_q$-sample supports
  the combination
  \begin{eqnarray}
W &:=& \{\text{$\boldsymbol{w}$  with an $\ell_q$  $\phi$-heavy hitter}\} \nonumber\\
F &:=& \overline{\{f(w)=w^p \mid p\geq q\}}_+ \nonumber\\
h &:=& 1/\phi    \enspace ,\nonumber
  \end{eqnarray}
 where the notation $\overline{F}_{+}$ is the
closure of a set $F$ of functions under nonnegative linear combinations.
 \end{theorem}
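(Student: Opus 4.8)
The plan is to assemble the statement from two facts already established in the excerpt: Corollary~\ref{hh:coro}, which controls the overhead for a single monomial $w^p$ with $p\ge q$, and the closure Lemma~\ref{MOclosure:lemma}, which upgrades any such family of targets to its closure under nonnegative linear combinations. First I would fix an arbitrary frequency vector $\boldsymbol{w}\in W$ --- i.e.\ one that has an $\ell_q$ $\phi$-heavy hitter --- and hold it fixed throughout (emulation overheads depend on the frequencies). Let $F_{\boldsymbol{w}}$ denote the set of all functions $g$ of frequency for which $\ell_q$ sampling emulates a weighted sample of $g(\boldsymbol{w})$ with overhead at most $1/\phi$. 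It then suffices to show $F\subseteq F_{\boldsymbol{w}}$ for every such $\boldsymbol{w}$, which is exactly the assertion that an $\ell_q$-sample supports $(W,F,1/\phi)$.

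Step one: since $\boldsymbol{w}$ has an $\ell_q$ $\phi$-heavy hitter, Corollary~\ref{hh:coro} says that for every $p\ge q$ the overhead of emulating an $\ell_p$ sample by an $\ell_q$ sample is at most $1/\phi$; equivalently, $w^p\in F_{\boldsymbol{w}}$ for all $p\ge q$. Step two: by Lemma~\ref{MOclosure:lemma}, $F_{\boldsymbol{w}}$ is closed under nonnegative linear combinations, hence contains $\overline{\{w^p\mid p\ge q\}}_+ = F$. If one prefers to see this directly: from the proof of Lemma~\ref{genratio:lemma}, the overhead of emulating $g(\boldsymbol{w})$ by $\ell_q$ equals $\max_i \frac{g(w_i)/\|g(\boldsymbol{w})\|_1}{w_i^q/\|\boldsymbol{w}\|_q^q}$, and if $g_j(w_i)\le \tfrac1\phi\,\|g_j(\boldsymbol{w})\|_1\,\tfrac{w_i^q}{\|\boldsymbol{w}\|_q^q}$ holds for each $j$ and each $i$, then for any $a_j\ge 0$ summing the numerators against the denominators preserves the inequality, giving $\sum_j a_j g_j\in F_{\boldsymbol{w}}$ with the same overhead $1/\phi$ (the mediant inequality). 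Step three: conclude $F\subseteq F_{\boldsymbol{w}}$; since $\boldsymbol{w}\in W$ was arbitrary, an $\ell_q$-sample supports $(W,F,1/\phi)$, and the variance and concentration bounds at sketch size $h\epsilon^{-2}$ follow from the definition of emulation together with Lemma~\ref{varfactor:lemma}.

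I do not expect a genuine obstacle here --- the theorem essentially repackages Corollary~\ref{hh:coro} and Lemma~\ref{MOclosure:lemma}. The one point that needs care is the scope of the closure $\overline{\cdot}_+$: Lemma~\ref{MOclosure:lemma} as quoted delivers finite nonnegative combinations, so if $F$ is intended to also contain countable sums or integrals $\int a_t\, w^t\, dt$ of monomials with $t\ge q$, I would extend the mediant computation above by linearity of summation/integration --- both sides of $g(w_i)\le \tfrac1\phi\,\|g(\boldsymbol{w})\|_1\, w_i^q/\|\boldsymbol{w}\|_q^q$ respect the nonnegative weights in the same way --- which is routine. A secondary sanity check is that Lemma~\ref{MOclosure:lemma} is being invoked with the source sample being $\ell_q$ sampling (a weighted sample according to $w^q$), which is exactly the form in which that lemma is stated.
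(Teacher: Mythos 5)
Your proposal is correct and takes essentially the same route as the paper: cite Corollary~\ref{hh:coro} to get overhead $1/\phi$ for each monomial $w^p$ with $p\ge q$, then invoke Lemma~\ref{MOclosure:lemma} to pass to the closure under nonnegative linear combinations. The extra mediant-inequality verification and the remark about integrals are nice sanity checks but are not doing anything beyond what the paper's two-line proof already relies on.
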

\begin{proof}
The claim for functions $f(w)=w^p$ is immediate from
Corollary~\ref{hh:coro}.  The claim for the nonnegative convex closure
of these function is a consequence of Lemma~\ref{MOclosure:lemma}.
\end{proof}
In particular, if the input distribution has an $\ell_q$  $\phi$-heavy hitter, then an $\ell_q$ sample of size
$\epsilon^{-2}/\phi$ emulates an $\ell_p$ sample of size
$\epsilon^{-2}$ for any $p>q$.

\begin{table}[tbh]
\caption{Datasets}
\label{table:datasets}
\centering
\begin{tabular}{|c|c|c|c|c|c|}
\hline
Dataset & $n/10^6$ & $\ln n$ & $\ell_1$ HH & $\ell_2$ HH & Zipf Parameter\\
\hline
\hline
SO out & $2.23$ & $14.6$ & $0.0016,0.0010$  & $0.0532,0.0221$ & $1.48$ \\
\hline
SO in & $2.30$ & $14.6$ & $0.0014,0.0008$  & $0.1068,0.0364$ & $1.38$  \\
\hline
AOL & $10.15$ & $16.1$ & $0.0274,0.0091$ & $0.8268,0.0911$ & $0.77$  \\
\hline
CAIDA & $1.07$ & $13.9$ & $0.0033,0.0032$ & $0.0479,0.0458$ & $1.35$  \\
\hline
UGR & $79.38$ & $18.2$ & $0.1117,0.0400$ & $0.8499,0.1093$ & $1.35$ \\
\hline
\end{tabular}
\end{table}
\begin{table}[tbh]
\caption{Overhead}\label{table:overhead}
\centering
{\scriptsize
\begin{tabular}{|c|c|c|c|c|c|c|c|}
\hline
Dataset & \multicolumn{3}{c|}{$\ell_1$ Emul.\ Overhead (Est.\ Overhead)} & \multicolumn{3}{c|}{$\ell_2$ Emul.\ Overhead (Est.\ Overhead)} & Concave\\
& $3$rd & $10$th & Universal & $3$rd & $10$th & Universal & Universal\\
\hline
SO out & 124.30 (42.76) & 600.44 (577.57) & 624.58 & 3.74 (1.90) & 18.04 (17.36) & $1.25 \times 10^5$ & 1672.60\\
\hline
SO in & 299.80 (155.32) & 677.56 (673.45) & 681.72 & 4.12 (2.58) & 9.30 (9.25) & $4.20 \times 10^4$ & 1628.36\\
\hline
AOL & 34.81 (33.45) & 36.38 (36.37) & 92.92 & 1.16 (1.12) & 1.21 (1.21) & $2.94 \times 10^5$ & 170.84\\
\hline
CAIDA & 31.23 (18.73) & 90.66 (56.20) & 301.28 & 2.16 (1.57) & 6.28 (4.03) & $2.65 \times 10^5$ & 846.15\\
\hline
UGR & 8.52 (8.17) & 8.95 (8.95) & 772.83 & 1.12 (1.09) & 1.18 (1.18) & $1.89 \times 10^{11}$ & 143.94\\
\hline
\end{tabular}
}
\end{table}

Table~\ref{table:datasets} reports properties and the relative $\ell_1$ and $\ell_2$ weights of the
two most frequent keys for our datasets (described in Section~\ref{datasets:sec}).
We can see that the most frequent key is a heavy hitter with
$1/\phi\leq 21$ for $\ell_2$ and $1/\phi \leq 715$ for $\ell_1$ which
gives us upper bounds on the overhead of emulating any $\ell_p$ sample
($p\geq 2$) by an $\ell_1$ or $\ell_2$ sample.
Table~\ref{table:overhead} reports (for $p=3,10$)
the overhead of emulating the respective $\ell_p$ sample and the (smaller)
overhead of estimating the $p$-th moment. (Recall the definitions of both types of overheads from Section~\ref{sec:benchmark-bounds}.)
We can see that high moments can be estimated well from $\ell_2$ samples and
with a larger overhead from $\ell_1$ samples.

\paragraph{Certified emulation.}
The quality guarantees of a combination $(W,F,h)$ are provided when
$\boldsymbol{w}\in W$.  In
practice, however, we may compute samples of arbitrary
dataset frequencies $\boldsymbol{w}$.   Conveniently, we are able to test
the validity of emulation by considering the most frequent
key in the sample:
For an $\ell_q$ sample of size $k$ we can compute
$r \gets \max_{x\in S} w_x^q/\|\boldsymbol{w}\|^q_q$ and certify that
our sample emulates $\ell_p$ samples $(p>q$) of size
$k r$.   If $kr$ is small, then we do not provide meaningful
accuracy but otherwise we can certify the emulation with sample size $kr$.
When the input  $\boldsymbol{w}$ has
an $\ell_q$  $\phi$-heavy hitter then a with-replacement
$\ell_q$ sample of size $k$ will include it with probability at least
$1-e^{-k \phi}$ and the result will be certified.  Note that the
result can only be certified if there is a heavy hitter.

\paragraph{Tradeoff between $W$ and $F$.}
If $\boldsymbol{w}$ has an $\ell_q$
$\phi$-heavy hitter, then $w_1$ is an $\ell_q$
$\phi$-heavy hitter, and is also an $\ell_p$  $\phi$-heavy hitter for every
$p\geq q$.    This means that for the target of moments with $p\geq 2$, an $\ell_2$
sample supports a larger set $W$ of frequencies than an
$\ell_1$ sample, including those with an $\ell_2$ $\phi$-heavy hitter
but not an $\ell_1$ $\phi$-heavy hitter.  The $\ell_1$ sample
however supports a larger family of target functions $F$ that includes
moments with $p\in [1,2)$.  Note that for a fixed overhead, with $\ell_q$ sampling the set $F$ of
supported functions decreases with $q$ whereas $W$ increases with $q$.

\subsection{Zipfian and Sub-Zipfian Frequencies}
Zipf distributions are very commonly used to model frequency
 distributions in practice.  We explore
supported combinations with frequencies that are (approximately)
Zipf.
\begin{definition}\label{def:zipf}
We say that a frequency vector $\boldsymbol{w}$ (of size $\|\boldsymbol{w}\|_0=n$) is $\Zipf[\alpha,n]$
  (Zipf with parameter $\alpha$) if for all $i$,  $w_i/w_1 = i^{-\alpha}$.
   \end{definition}
Values $\alpha\in  [1,2]$ are common in practice.
The best-fit Zipf parameter for the datasets we studied is reported in Table~\ref{table:datasets}, and their
 frequency distributions (sorted by rank) are shown in
 Figure~\ref{frequencies:fig}. We can see that our datasets are approximately Zipf (which would be an approximate straight line) and for all but one we have $\alpha\in [1.3,1.5]$.

\begin{figure}[tbh]
\centering
\includegraphics[width=0.49\linewidth]{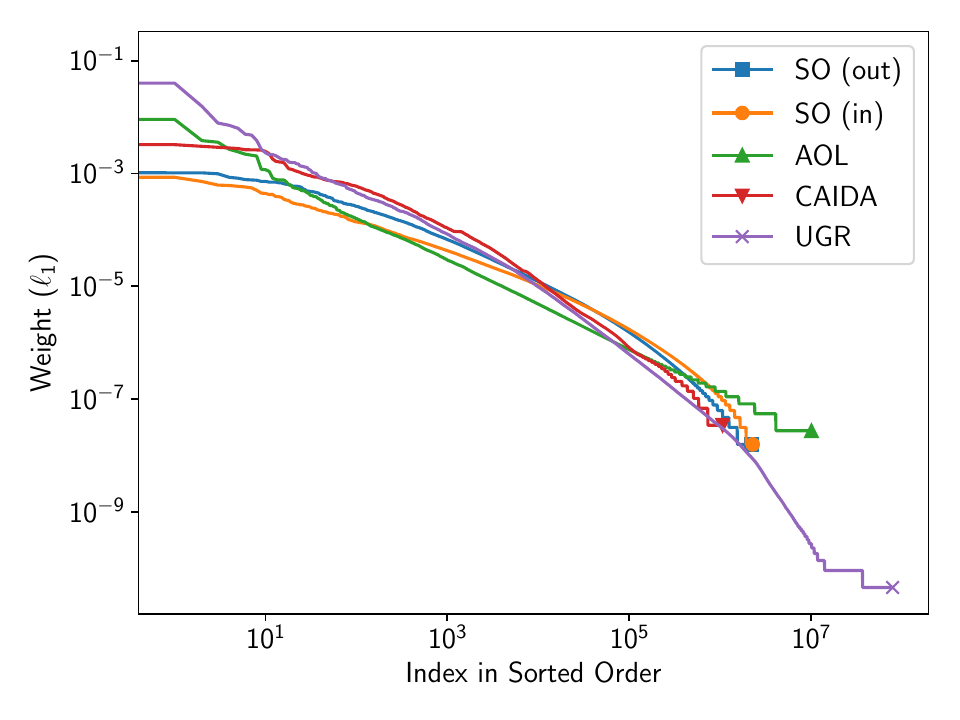}
\caption{The frequency distributions of the datasets (sorted in non-increasing order).}\label{frequencies:fig}
\end{figure}

We now define a broader class of approximately Zipfian distributions.
 \begin{definition}
A frequency vector $\boldsymbol{w}$ is $\subZipf[\alpha,c,n]$ if for all
     $i$, $\frac{w_i}{w_1}\leq ci^{-\alpha}$.\footnote{This is a slight abuse
     of notation since the parameters do not fully specify a
     distribution.}
 \end{definition}
Note that $\Zipf[\alpha,n]$ is sub-Zipfian with the same $\alpha$ and $c=1$.
 We show that sub-Zipf frequencies (and in particular Zipf
   frequencies) have heavy hitters.
   \begin{lemma} \label{zipfHH:lemma}
If a frequency vector $\boldsymbol{w}$ is $\subZipf[\alpha,c,n]$, and $q$ is such that $q\alpha \geq
   1$, the frequency vector has an $\ell_q$   $\frac{1}{c^qH_{n,\alpha
       q}}$-heavy hitter, where
 $H_{n,\alpha} := \sum_{i=1}^n i^{-\alpha}$ are the generalized
     harmonic numbers.
 \end{lemma}
 \begin{proof}
      By the definition of $\subZipf[\alpha,c,n]$ frequencies,
    \[
      \left\| \frac{\boldsymbol{w}}{w_1}\right\|^q_q = \sum_{i=1}^n \left(\frac{w_i}{w_1}\right)^q  \leq c^q
        \sum_{i=1}^n i^{-\alpha q} = c^q H_{n,\alpha q}\enspace .
    \]
    Hence, the largest element is an $\ell_q$   $\frac{1}{c^qH_{n,\alpha
       q}}$-heavy hitter.
  \end{proof}

\begin{table}\caption{Supported Combinations for
    $\subZipf[\alpha,c,n]$ Frequencies}\label{subzipf:table}
    \centering
          \begin{tabular}{l||l|l|l}
Method &     $\subZipf$ Parameters & $F$ & Overhead  \\
       \hline
 $\ell_1$ sampling &       $\{\alpha\geq 2\}$ & $\overline{\{ f(w)=w^p \mid p\geq 1\}}_+$ &
                                                               $1.65 c$ \\
 $\ell_1$ sampling &     $\{ \alpha\geq 1\}$ & $\overline{\{f(w)=w^p \mid p\geq
                                              1\} }_+$ & $ (1+\ln n) c$ \\
\hline
$\ell_2$ sampling &   $\{\alpha\geq 1\}$ & $\overline{\{f(w)=w^p \mid p\geq 2\}}_+$ &
                                                              $1.65 c^2$ \\

$\ell_2$ sampling &    $\{\alpha\geq 1/2\}$ & $\overline{\{f(w)=w^p \mid p\geq 2\}}_+$ &
                                                              $(1+\ln n) c^2$

          \end{tabular}
\end{table}
Table~\ref{subzipf:table} lists supported combinations that include these approximately Zipfian
distributions.
\begin{lemma}
       The combinations shown in Table~\ref{subzipf:table}
       are supported by $\ell_1$ and $\ell_2$ samples.
\end{lemma}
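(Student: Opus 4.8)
The plan is to package Lemma~\ref{zipfHH:lemma} together with Theorem~\ref{combinationsHH:thm} and then to bound the generalized harmonic numbers that appear. The one inequality that does all the work is already in the proof of Lemma~\ref{zipfHH:lemma}: a $\subZipf[\alpha,c,n]$ frequency vector satisfies $\left\|\boldsymbol{w}/w_1\right\|_q^q \leq c^q H_{n,q\alpha}$ whenever $q\alpha\geq 1$ (equation \eqref{Zipfoverhead:eq}). By Corollary~\ref{boundq:coro} this means an $\ell_q$ sample emulates $\ell_p$ sampling for \emph{every} $p\geq q$ with overhead at most $c^q H_{n,q\alpha}$; since this bound is uniform over $p\geq q$, Lemma~\ref{MOclosure:lemma} extends it to every nonnegative linear combination of the $w^p$ with $p\geq q$, which is exactly the set $F$ listed in each row of Table~\ref{subzipf:table}. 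Equivalently, one invokes Theorem~\ref{combinationsHH:thm} with $\phi = 1/(c^q H_{n,q\alpha})$, the $\ell_q$ $\phi$-heavy hitter guaranteed by Lemma~\ref{zipfHH:lemma}.

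The only remaining task is to record two harmonic-number estimates and instantiate them row by row. Since $H_{n,s}$ is non-increasing in $s$, we have $H_{n,s}\leq H_{\infty,2}=\pi^2/6<1.65$ whenever $s\geq 2$, and $H_{n,s}\leq H_{n,1}=H_n\leq 1+\ln n$ whenever $s\geq 1$. For $\ell_1$ sampling take $q=1$: if $\alpha\geq 2$ then $q\alpha\geq 2$ and the overhead is $cH_{n,\alpha}\leq 1.65c$; if $\alpha\geq 1$ then $q\alpha\geq 1$ and the overhead is $cH_{n,\alpha}\leq (1+\ln n)c$; in both rows $q=1$, so $F=\overline{\{w^p\mid p\geq 1\}}_+$. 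For $\ell_2$ sampling take $q=2$: if $\alpha\geq 1$ then $q\alpha\geq 2$ and the overhead is $c^2 H_{n,2\alpha}\leq 1.65c^2$; if $\alpha\geq 1/2$ then $q\alpha\geq 1$ and the overhead is $c^2 H_{n,2\alpha}\leq (1+\ln n)c^2$; in both rows $q=2$, so $F=\overline{\{w^p\mid p\geq 2\}}_+$. These four cases are precisely the four rows of Table~\ref{subzipf:table}.

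There is no real obstacle here — the statement is essentially a repackaging of Lemma~\ref{zipfHH:lemma} and Theorem~\ref{combinationsHH:thm} — so the plan is mainly to make a few small checks explicit rather than grind calculations. I would note that the hypothesis $q\alpha\geq 1$ of Lemma~\ref{zipfHH:lemma} holds in every row (it is tight in the $\ell_2$, $\alpha=1/2$ case, which is exactly why that row gets the weaker $(1+\ln n)$ bound instead of the constant one); that the emulation direction used is the admissible one, since we always emulate $\ell_p$ by $\ell_q$ with $p\geq q$, so the maximum probability ratio is attained at the most frequent key (cf.\ the remark following Lemma~\ref{genratio:lemma}) and Corollary~\ref{boundq:coro} applies; and that $c\geq 1$ for any $\subZipf[\alpha,c,n]$ vector (take $i=1$ in the definition), so $\phi=1/(c^q H_{n,q\alpha})\in(0,1]$ and Theorem~\ref{combinationsHH:thm} is applicable. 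The only thing to be careful about is using $\pi^2/6<1.65$ versus $H_n\le 1+\ln n$ with the correct value of $q\alpha$ in each of the four cases.
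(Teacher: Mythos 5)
Your proposal is correct and follows essentially the same route as the paper: invoke Lemma~\ref{zipfHH:lemma} and Theorem~\ref{combinationsHH:thm}, then bound $H_{n,q\alpha}$ by $\zeta(2)<1.65$ when $q\alpha\geq 2$ and by $1+\ln n$ when $q\alpha\geq 1$, instantiating the four rows with $q\in\{1,2\}$. The only difference is that you spell out a few sanity checks the paper leaves implicit (the direction $p\geq q$, $c\geq 1$ so $\phi\in(0,1]$, and the hypothesis $q\alpha\geq 1$ holding in each row), which is fine but not a departure from the paper's argument.
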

\begin{proof}
  We use Lemma~\ref{zipfHH:lemma} and Theorem~\ref{combinationsHH:thm}.
  Recall that when $\alpha=1$, the harmonic sum is $H_{n,1} \leq 1+\ln n$. For
 $\alpha>1$, $H_{n,\alpha}\leq \zeta(\alpha)$, where
$\zeta(\alpha) := \sum_{i=1}^\infty i^{-\alpha}$ is the Zeta function.
     The Zeta function is decreasing with $\alpha$, defined for  $\alpha>1$ with an asymptote at
     $\alpha=1$, and is at most $1.65$ for $\alpha\geq 2$.

           When $q\alpha \geq 2$, the overhead is at most $c^q \zeta(q\alpha) \leq
     1.65 c^q$. When $q\alpha = 1$ the overhead is at most $(1+\ln n) c^q$ and
     when $q\alpha>1$ we can bound it by $\min\{1+\ln n,
     \zeta(q\alpha)\} c^q$.
\end{proof}
We see that for these approximately Zipf distributions,
$\ell_1$ or $\ell_2$ samples emulate $\ell_p$ samples with small overhead.

\subsection{Experiments on Estimate Quality}
The overhead factors reported in Table~\ref{table:overhead} are in a sense worst-case upper bounds (for the dataset frequencies).
Figure~\ref{momentsest:fig} reports the actual estimation error (normalized root mean square error) for high moments for representative datasets as a function of sample size.  The estimates are with PPSWOR
($\ell_1$ sampling without replacement) and $\ell_2$ sampling with and
without replacement.
Additional results are reported in Appendix~\ref{actualmore:sec}.  We observe that the actual accuracy is significantly better than even the benchmark bounds.

Finally we consider estimating the full distribution of frequencies, that is, the curve that relates the frequencies of keys to their ranks. We do this by estimating the actual rank of each key in the sample (using an appropriate threshold function of frequency) --- more details are in Appendix~\ref{actualmore:sec}.  Representative results are reported in Figure~\ref{estfreq:fig} for PPSWOR and with-replacement $\ell_2$ sampling (additional results are reported in Appendix~\ref{actualmore:sec}). We used samples of size $k=32$ or $k=1024$ for each set of estimates.  We observe that generally the estimates are fairly accurate even with a small sample size (despite the fact that threshold functions require large sketches in the worst case). We see that $\ell_2$ samples are accurate for the frequent keys but often have no representatives from the tail whereas the without-replacement $\ell_1$ samples are more accurate on the tail.

\begin{figure}[thb]
\centering
\includegraphics[width=0.49\textwidth]{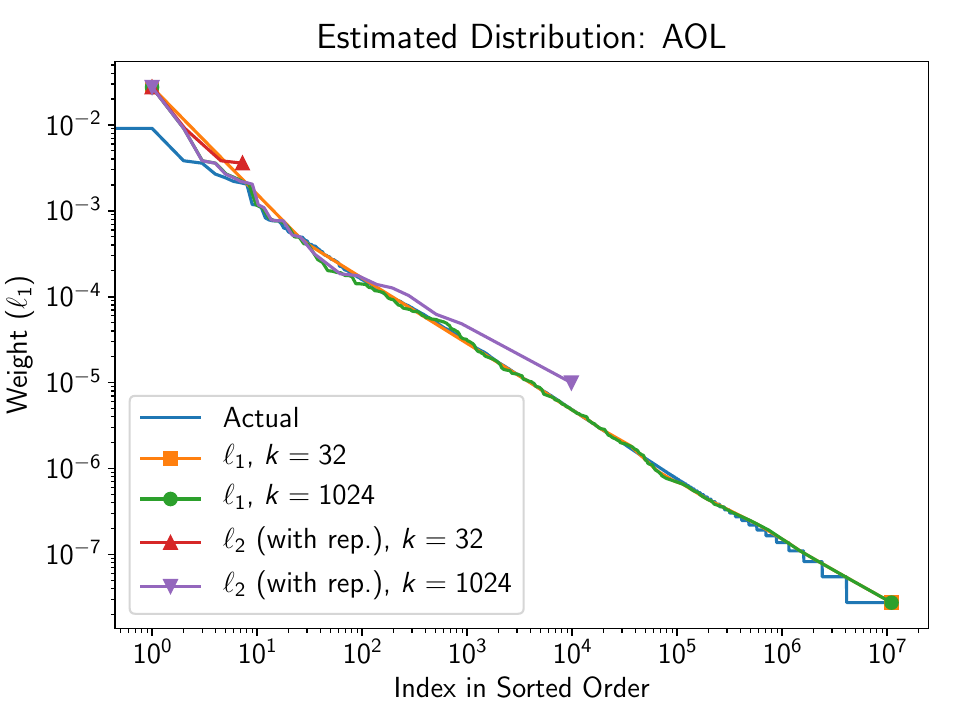}
\includegraphics[width=0.49\textwidth]{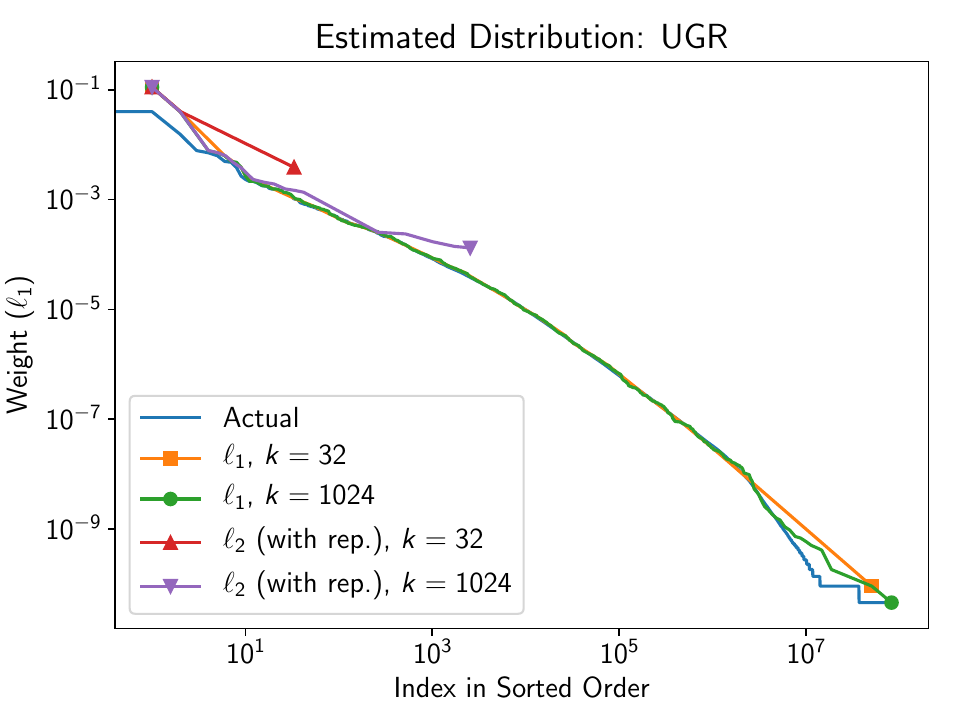}
\caption{Actual and estimated distribution of frequency by rank.  The estimates use PPSWOR and with-replacement $\ell_2$ sampling and sample sizes $k=32,1024$.\label{estfreq:fig}}
\end{figure}

\begin{figure}[thb]
\centering
    \includegraphics[width=0.49\textwidth]{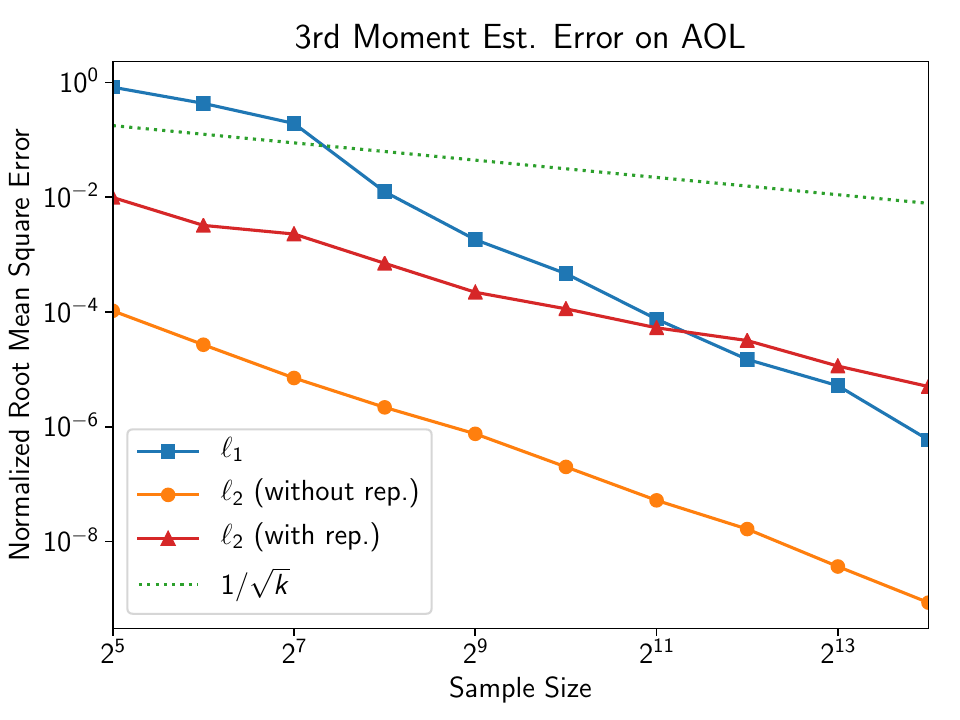}
    \includegraphics[width=0.49\textwidth]{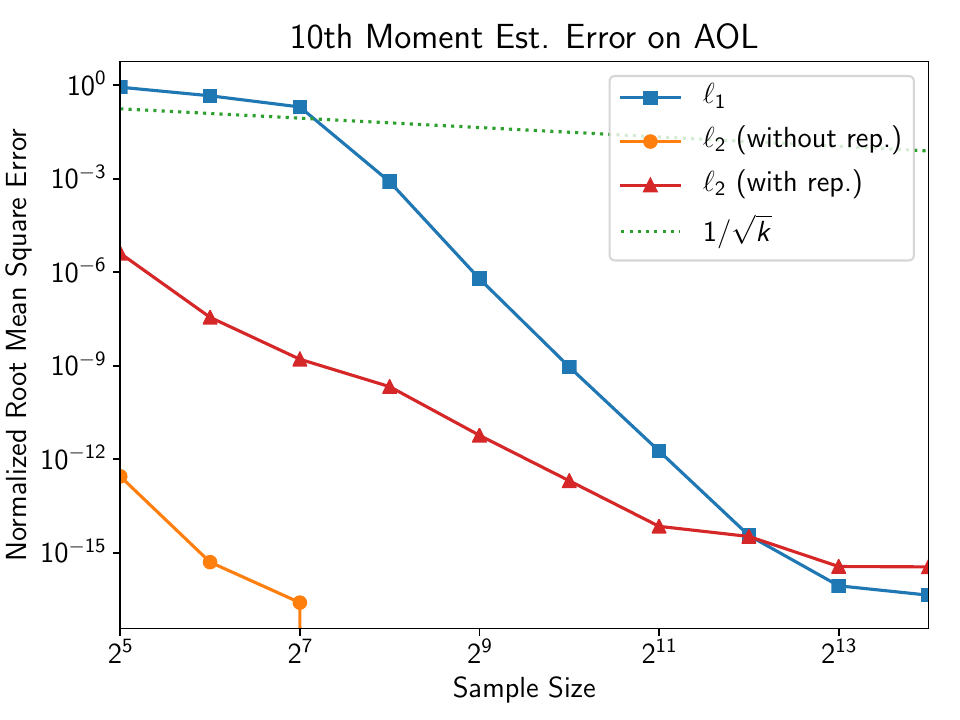}
    \includegraphics[width=0.49\textwidth]{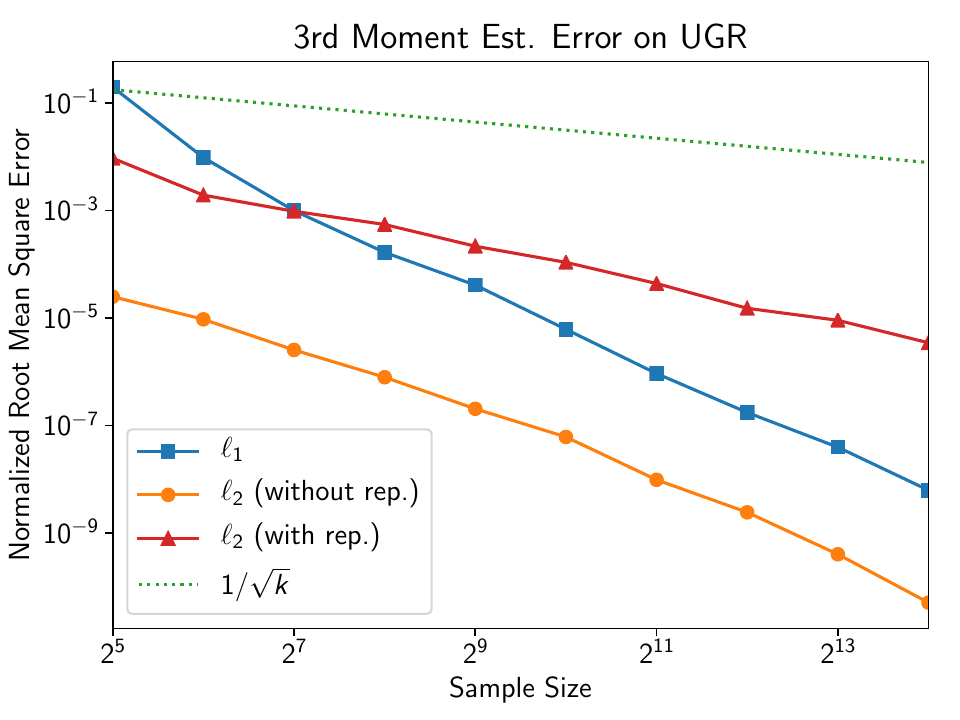}
    \includegraphics[width=0.49\textwidth]{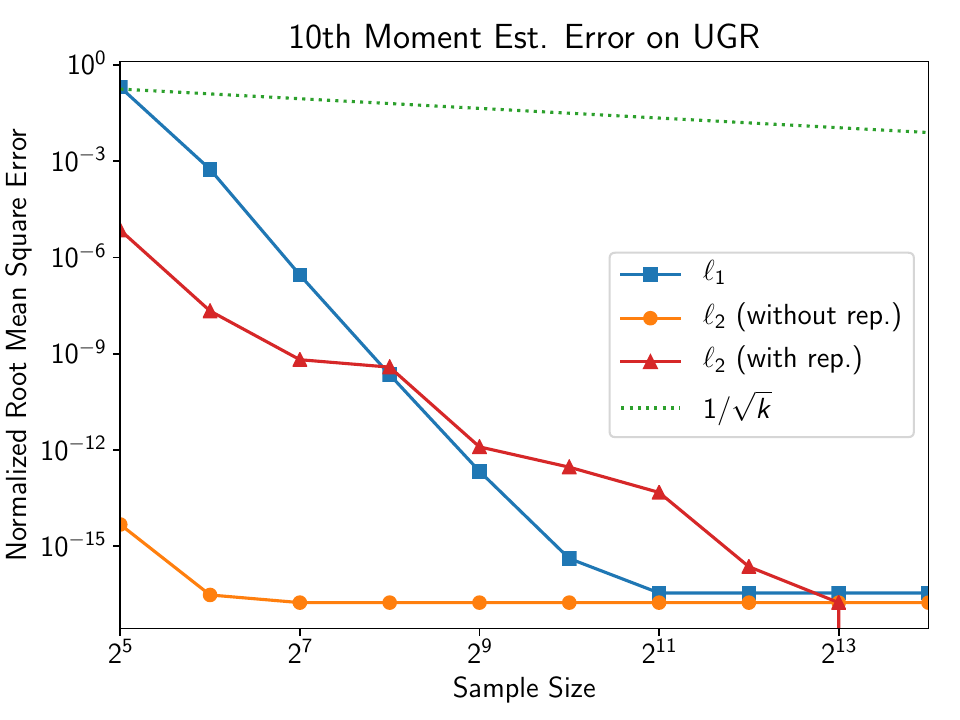}
\caption{Estimating 3rd and 10th moments on various datasets using PPSWOR ($\ell_1$ without replacement) and $\ell_2$ samples (with and without replacement).  The error is averaged over 50 repetition.}
\label{momentsest:fig}
\end{figure}

\subsection{Worst-Case Bound on Overhead}
The overhead \eqref{overheadpq:eq} of $\|\boldsymbol{w}/w_1\|^2_2/\|\boldsymbol{w}/w_1\|^p_p$ is the space factor increase needed for an
  $\ell_2$ sample to emulate an $\ell_p$ sample on the frequencies $\boldsymbol{w}$ (and accurately estimate the $p$-th moment).  A natural question is whether there is a better way to emulate an $\ell_p$ sampling with a polylogarithmic size sampling sketch. The following shows that in a sense an $\ell_2$ sample is the best we can do.
    \begin{lemma}
  Let $p \geq 2$. The overhead of emulating an $\ell_p$ sample using an $\ell_2$ sample on any frequency vector
  $\boldsymbol{w}$ of size $n$ is $n^{1-2/p}$.
  \end{lemma}
  \begin{proof} For any vector $v \in \mathbb{R}^n$, H\"older's inequality implies that $\|v\|_2 \leq n^{1/2-1/p}\|v\|_p$. As a result,
  \[\|v\|_2^2 \leq n^{1-2/p}\|v\|_p^2 \leq n^{1-2/p}\|v\|_p^p.\]
  Using Lemma~\ref{genratio:lemma}, we can bound the overhead of emulating an $\ell_p$ sample using an $\ell_2$ sample on any frequency vector
  $\boldsymbol{w}$ as
  \[\frac{\|\boldsymbol{w}/w_1\|^2_2}{\|\boldsymbol{w}/w_1\|^p_p} \leq n^{1-2/p}.\]
  \end{proof}
  This matches upper bounds on sketch size attained with dedicated sketches for $p$-th moment estimation~\cite{IW:stoc05,AndoniKO:FOCS11} and the worst-case lower bound of
  $\tilde{\Omega}(n^{1-2/p})$~\cite{AMS,li2013tight}. Interestingly, the worst case distributions that establish that bound are those where the most frequent key
is an $\ell_p$ heavy-hitter but not an $\ell_2$ heavy-hitter.

\subsection{Near-Uniform Frequency Distributions}
We showed that frequency distributions with heavy hitters are easy
for high moments and moreover, the validity of the result can be
certified.
Interestingly, the other extreme of near-uniform distributions (where
$w_1/w_n$ is bounded) is also easy.     Unfortunately, unlike the
case with heavy hitters, there is
no ``certificate'' to the validity of the emulation.
\begin{lemma}\label{lem:overhead-for-bounded-range}
Let $\boldsymbol{w}$ be a frequency distribution with support size
$n$. Then the overhead of using $\ell_1$ or $\ell_0$  sampling to
emulate $\ell_p$ sampling is at most $\left(\frac{w_1}{w_n}\right)^p$.
\end{lemma}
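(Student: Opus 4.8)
The plan is to apply Lemma~\ref{varfactor:lemma}, which reduces the overhead of emulating $\ell_p$ sampling by some scheme with pps probabilities $\boldsymbol{q}$ to the quantity $\max_i p_i/q_i$, where $p_i = w_i^p/\|\boldsymbol{w}\|_p^p$ are the $\ell_p$ pps probabilities. For $\ell_1$ sampling we have $q_i = w_i/\|\boldsymbol{w}\|_1$, and for $\ell_0$ sampling (which samples the $n$ active keys uniformly) we have $q_i = 1/n$. So everything comes down to bounding $\max_i p_i/q_i$ in each of the two cases, purely in terms of the ratio $w_1/w_n$.

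First I would write out the ratio. For $\ell_1$ (in the valid regime $p\geq 1$, so that emulation applies),
\[
\frac{p_i}{q_i} = \frac{w_i^p/\|\boldsymbol{w}\|_p^p}{w_i/\|\boldsymbol{w}\|_1} = w_i^{p-1}\,\frac{\|\boldsymbol{w}\|_1}{\|\boldsymbol{w}\|_p^p},
\]
which, since $w\mapsto w^{p-1}$ is nondecreasing and $w_1$ is the largest frequency, is maximized at $i=1$. For $\ell_0$, $p_i/q_i = n\,w_i^p/\|\boldsymbol{w}\|_p^p$ is likewise maximized at $i=1$. Next I would bound the leftover factor using only the spread of the frequencies: $\|\boldsymbol{w}\|_1 = \sum_i w_i \leq n w_1$ and $\|\boldsymbol{w}\|_p^p = \sum_i w_i^p \geq n w_n^p$. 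Substituting, the $\ell_1$ overhead is at most $w_1^{p-1}\cdot \frac{n w_1}{n w_n^p} = (w_1/w_n)^p$, and the $\ell_0$ overhead is at most $\frac{n w_1^p}{n w_n^p} = (w_1/w_n)^p$, which is the claim.

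These computations are entirely routine and there is no real obstacle; the one point needing a word of care is that Lemma~\ref{genratio:lemma} is stated only for $\ell_q$ with $q>0$, so the $\ell_0$ case should be argued directly from the fact that $\ell_0$ sampling is uniform on the support, as above (and in fact the $\ell_0$ bound then holds for every $p>0$, not just $p\geq 1$). I would close by echoing the surrounding discussion: unlike the heavy-hitter case of Corollary~\ref{hh:coro}, this guarantee cannot be certified from the sample, because a near-uniform frequency vector leaves no witness of its near-uniformity in a small sample.
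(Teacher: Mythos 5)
Your proof is correct and follows essentially the same route as the paper's: both bound the overhead by $\max_i p_i/q_i$, reduce it to a ratio of norms, and then apply the elementary estimates $\|\boldsymbol{w}\|_1 \leq n w_1$ (resp.\ $\|\boldsymbol{w}\|_0 = n$) and $\|\boldsymbol{w}\|_p^p \geq n w_n^p$. The paper goes through Lemma~\ref{genratio:lemma} (treating $\ell_0$ via the convention that $\|\boldsymbol{w}/w_1\|_0 = n$) whereas you re-derive that ratio directly from Lemma~\ref{varfactor:lemma} and handle $\ell_0$ as literal uniform sampling on the support, which is a slightly more careful but substantively identical presentation.
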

\begin{proof}
  We use Lemma~\ref{genratio:lemma} and lower bound the denominator
 $\left\| \frac{\boldsymbol{w}}{w_1}\right\|^p_p$. Note that for any $\boldsymbol{w}$ with support size $n$, $\left\| \frac{\boldsymbol{w}}{w_1}\right\|^1_1 \leq n$ and $\left\| \frac{\boldsymbol{w}}{w_1}\right\|_0 = n$. Now,
\[
\left\| \frac{\boldsymbol{w}}{w_1}\right\|^p_p = \sum_{i=1}^n{\left(\frac{w_i}{w_1}\right)^p} \geq \sum_{i=1}^n{\left(\frac{w_n}{w_1}\right)^p} = n \cdot \left(\frac{w_n}{w_1}\right)^p.
\]
The overhead for $q\in \{0,1\}$ is then
\[
\frac{\left\| \frac{\boldsymbol{w}}{w_1}\right\|^q_q}{\left\| \frac{\boldsymbol{w}}{w_1} \right\|^p_p } \leq \frac{n}{n \cdot \left(\frac{w_n}{w_1}\right)^p} = \left(\frac{w_1}{w_n}\right)^p.
\]
\end{proof}

\section{Universal Samples} \label{uni:sec}

In this section, we study combinations where the emulation is {\em universal}, that is,
the set of functions $F$  includes all monotonically non-decreasing functions of
frequencies. We denote the set of all monotonically non-decreasing functions by $M$.
Interestingly, there are sampling probabilities that provide universal
emulation for any frequency vector $\boldsymbol{w}$.
     \begin{lemma}\cite{MultiObjective}\label{MO:lemma}
Consider the probabilities $\boldsymbol{q}$ where the $i$-th most frequent key has
  $q_i=\frac{1}{iH_n}$ ($H_n = \sum_{i=1}^n\frac{1}{i}$ is the $n$-th harmonic number).  Then
a weighted sample by
  $\boldsymbol{q}$ is a universal emulator with overhead at most
$H_n$.
     \end{lemma}
     \begin{proof}
     Consider a monotonically  non-decreasing $f$ with respective PPS
     probabilities $\boldsymbol{p}$.  By definition, for the $i$-th
     most frequent key, $p_i = \frac{f(w_i)}{\|f(\boldsymbol{w}) \|_1} \leq \frac{f(w_i)}{f(w_1) + \ldots + f(w_i)}
     \leq \frac{1}{i}$ (note that for $j \leq i$, $f(w_i) \leq f(w_j)$).  Therefore, $p_i/q_i \leq \frac{1}{i} (i H_n)  = H_n$.
   \end{proof}
   This universal sampling, however, cannot be implemented with small
   (polylogarithmic size) sketches. This is because
$M$ includes functions that require large (polynomial size)
   sketches such as thresholds ($I_{w\geq T}$ for some fixed $T$)
   and high moments ($p>2$).  We therefore aim to find small sketches that
   provide universal emulation for a restricted $W$.

   For particular sampling probabilities  $\boldsymbol{q}$ and frequencies $\boldsymbol{w}$ we
consider the {\em universal emulation overhead} to be the overhead factor that will
allow the sample according to $\boldsymbol{q}$ to emulate weighted sampling with respect to
$f(\boldsymbol{w})$ for any $f\in M$ (the set of all monotonically  non-decreasing functions):
\begin{equation}
  \max_{f\in M} \max_i f(w_i)/(\|f(\boldsymbol{w})\|_1 q_i)
\end{equation}
Interestingly, the universal emulation overhead of $\boldsymbol{q}$ does not
depend on the particular $\boldsymbol{w}$.
\begin{lemma} \label{universaloverhead:lemma}
  The universal emulation overhead of $\boldsymbol{q}$ is at most
  \[\max_i   1/(i
    q_i).\]
    The bound $\max_i 1/(i q_i)$ is tight and always at least $H_n$ even when $W$ contains a single
    $\boldsymbol{w}$, as long as frequencies are distinct ($w_i>w_{i+1}$ for all $i$).
\end{lemma}
  \begin{proof}
Consider   $\boldsymbol{w}$.  We have
     \begin{eqnarray*}
\lefteqn{\max_{f\in M} \max_i f(w_i)/(\|f(\boldsymbol{w})\|_1 q_i) }\\
 &=&
  \max_i \max_{f\in M} f(w_i)/(\|f(\boldsymbol{w})\|_1 q_i)  \leq
  \max_i 1/(i q_i) \ .
     \end{eqnarray*}
As in Lemma~\ref{MO:lemma}, the last inequality follows
since $f$ is non-decreasing:  for all $j<i$ we must have $f(w_j) \geq f(w_i)$.
Therefore, $f(w_i)/\|f(\boldsymbol{w})\|_1 \leq 1/i$, and we get that the overhead is upper-bounded by $\max_i 1/(i q_i)$.

Note that for the threshold function $f(w) = I_{w \geq w_i}$, $f(w_i)/\|f(\boldsymbol{w})\|_1 = 1/i$ (if $w_{i+1} < w_i$). Hence, when the frequencies are distinct, the bound $\max_i 1/(i q_i)$ on the overhead is tight.

To conclude the proof, we show that $\max_i 1/(i q_i) \geq H_n$. Assume, for the sake of contradiction, that $\frac{1}{i q_i} < H_n$ for all $i$. Then, $\frac{1}{i} < q_i H_n$, and by summing over all $i$, we get $\sum_{i=1}^n {\frac{1}{i}} < \sum_{i=1}^n {q_i H_n} = H_n$, which is a contradiction. Therefore, we get that $\max_i 1/(i q_i) \geq H_n$.
\end{proof}
\begin{remark}
Lemmas \ref{MO:lemma} and \ref{universaloverhead:lemma} are connected in the following way. Suppose we are given sampling probabilities $\boldsymbol{q}$ and would like to compute their universal emulation overhead. From Lemma~\ref{universaloverhead:lemma}, we know that the overhead is bounded by $\max_i 1/(i q_i)$. Alternatively, Lemma~\ref{MO:lemma} tells us that the sampling probabilities $\{\frac{1}{iH_n}\}$ have universal emulation overhead $H_n$. The overhead of emulating the sampling probabilities $\{\frac{1}{iH_n}\}$ using $\boldsymbol{q}$ is $\max_i \frac{1}{iH_n q_i}$. Since the overhead accumulates multiplicatively (Remark~\ref{multaccum:rem}), we can combine the last two statements, and conclude the universal emulation overhead of $\boldsymbol{q}$ is upper-bounded by $\max_i \frac{1}{iH_n q_i} \cdot H_n = \max_i 1/(i q_i)$, as in Lemma~\ref{universaloverhead:lemma}.
\end{remark}

We can similarly consider, for given sampling probabilities $\boldsymbol{q}$,
the
{\em universal estimation overhead}, which is the overhead needed for estimating full
$f$-statistics for all $f \in M$.
As discussed in
Section~\ref{sec:preliminaries}, estimation of the full $f$-statistics is a weaker
requirement than emulation. Hence, for any particular $\boldsymbol{q}$ the estimation overhead can
be lower than the emulation overhead.  The estimation overhead,
however, is still at least $H_n$.
 \begin{lemma} \label{universalestoverhead:lemma}
    The universal estimation overhead for estimating all monotone $f$-statistics
    for
    $\boldsymbol{q}$ is
    \[\max_i   \frac{1}{i^2} \sum_{j=1}^i \frac{1}{q_j}\ .\]
\end{lemma}
\begin{proof}
  By Corollary~\ref{cor:estimation-overhead}, the universal estimation overhead with frequencies $\boldsymbol{w}$ is
     \begin{equation}\label{eq:uni-est-from-def}
\max_{f\in M} \sum_j \frac{(f(w_j))^2}{\|f(\boldsymbol{w})\|_1^2}
\frac{1}{q_j} \enspace .
     \end{equation}
     It suffices to consider $f$ that are threshold functions (follows from the inequality $\frac{a_1 + \ldots + a_k}{b_1 + \ldots + b_k} \leq \max_{j=1,\ldots,k}\frac{a_j}{b_j}$ as each $f \in M$ can be represented as a combination of threshold functions). Specifically, consider the threshold function $f(w)=I_{w \geq w_i}$ for some $i$. As in Lemmas \ref{MO:lemma} and \ref{universaloverhead:lemma}, the
     expression \eqref{eq:uni-est-from-def} for the threshold function $f(w)=I_{w \geq w_i}$ has
     $f(w_j)/\|f(\boldsymbol{w})\|_1 = 1/i$ for $j\leq i$ and $0$
     otherwise (assuming all the frequencies in $\boldsymbol{w}$ are distinct).
     We get that the sum is $\frac{1}{i^2} \sum_{j=1}^i
     \frac{1}{q_j}$.  The claim follows from taking the maximum over
     all threshold functions.
\end{proof}

\subsection{Universal Emulation Using Off-the-Shelf Sampling Sketches}
In our context, the probabilities $\boldsymbol{q}$ are not
something we directly control but rather emerge as an artifact of applying a
certain sampling scheme to a dataset with certain frequencies
$\boldsymbol{w}$.  We will explore the universal overhead of
the $\boldsymbol{q}$  obtained
when applying off-the-shelf schemes (see
Section~\ref{offtheshelf:sec})  to $\Zipf$ frequencies and to our datasets.

\begin{lemma}\label{lem:univ-zipf}
For a frequency vector $\boldsymbol{w}$ that is $\Zipf[\alpha,n]$, $\ell_p$ sampling with
     $p=1/\alpha$ is a universal emulator with (optimal) overhead $H_n$.
\end{lemma}
\begin{proof}
Consider the sampling probability $q_i$ for the $i$-th key (with frequency $w_i$) when using $\ell_{1/\alpha}$ sampling. From Definition~\ref{def:zipf}, we know that $w_i = \frac{w_1}{i^\alpha}$. We get that $w_i^{1/\alpha} = \frac{w_1^{1/\alpha}}{i}$, and consequently,
\[
q_i = \frac{w_i^{1/\alpha}}{\sum_{j=1}^n{w_j^{1/\alpha}}} = \frac{w_1^{1/\alpha} / i}{\sum_{j=1}^n{w_1^{1/\alpha}/j}} = \frac{1 / i}{\sum_{j=1}^n{1/j}} = \frac{1}{i H_n}.
\]
The sampling probabilities $\boldsymbol{q}$ are the same as in Lemma~\ref{MO:lemma}. As a result, we get that applying $\ell_{1/\alpha}$ sampling to the frequency vector $\boldsymbol{w}$ has universal emulation overhead of $H_n$.
\end{proof}
Interestingly, for $\alpha \geq 1/2$, universal emulation as in Lemma~\ref{lem:univ-zipf} is attained by $\ell_p$ sampling with
$p\leq 2$, which can be implemented with polylogarithmic-size sketches.
Note that we match here a different $\ell_p$ sample for each possible $\Zipf$ parameter $\alpha$ of the
data frequencies.   A sampling scheme that
emulates $\ell_p$ sampling for a range $[p_1,p_2]$ of
$p$ values with some overhead $h$ will be a
universal emulator with overhead $h H_n$ for $\Zipf[\alpha]$ for
$\alpha\in [1/p_2,1/p_1]$ (see Remark~\ref{multaccum:rem}).
One such sampling scheme with polylogarithmic-size sketches was provided in
\cite{CapSampling, MultiObjective}.  The sample emulates all
concave sublinear functions, including capping functions
$f(w)=\min\{w,T\}$ for $T>0$ and low moments with $p\in [0,1]$, with
$O(\log n)$ overhead.

We next express a condition on frequency distributions under which a
multi-objective concave-sublinear sample provides
a universal emulator.
The condition is that for all $i$, the weight of the $i$-th most frequent key is at least
$c/i$ times the weight of the tail from $i$.
\begin{lemma}
    Let $\boldsymbol{w}$ be such that
      $\min_i \frac{i w_i}{\sum_{j=i+1}^n w_j} \geq c$.  Then a sample
      that emulates all concave-sublinear functions
with overhead $h'$ is a universal emulator for $\boldsymbol{w}$
with overhead at most $h'(1+1/c)$.
\end{lemma}
\begin{proof}
Denote the sampling probabilities of the given sampling method (that emulates all concave-sublinear functions) by $\boldsymbol{q}$. By Lemma~\ref{universaloverhead:lemma}, we need to bound $\max_i 1/(i q_i)$. First, consider the $i$-th largest frequency $w_i$, and assume that we are sampling according to the capping function $f(w) = \min\{w, w_i\}$ instead of $\boldsymbol{q}$. Later we will remove this assumption (and pay additional overhead $h'$) since the given sampling scheme emulates all capping functions. When sampling according to $f(w) = \min\{w, w_i\}$, the sampling probability of the $i$-th key is $\frac{f(w_i)}{\sum_{j=1}^n f(w_j)}$. Our goal is to bound $\frac{1}{i\frac{f(w_i)}{\sum_{j=1}^n f(w_j)}}$. Using the condition in the statement of the lemma,
      \[
      \frac{f(w_i)}{\sum_{j=1}^n f(w_j)} = \frac{w_i}{i w_i + \sum_{j=i+1}^n w_j}
     \geq  \frac{w_i}{i w_i + \frac{i}{c} w_i} = \frac{1}{i(1+1/c)}\enspace .
      \]
Hence,
\[
\frac{1}{i\frac{f(w_i)}{\sum_{j=1}^n f(w_j)}} \leq \frac{1}{i\frac{1}{i(1+1/c)}} = 1 + \frac{1}{c}\enspace .
\]

We now remove the assumption that we sample according to $f(w) = \min\{w, w_i\}$. Note that the given sampling scheme emulates all concave sublinear functions (a family which includes all capping functions) with overhead $h'$. From the definition of emulation overhead, for $f(w) = \min\{w, w_i\}$ we get that
\[
\frac{f(w_i)}{q_i\sum_{j=1}^n f(w_j)} \leq h'\enspace .
\]
Now,
\[
\frac{1}{i q_i} = \frac{f(w_i)}{q_i\sum_{j=1}^n f(w_j)} \cdot \frac{1}{i\frac{f(w_i)}{\sum_{j=1}^n f(w_j)}} \leq h'\left(1 + \frac{1}{c}\right)\enspace .
\]
Since the bound applies to all $i$, we get that the universal emulation overhead is $\max_i 1/(i q_i) \leq h'(1+1/c)$, as desired.
\end{proof}
       Interestingly, for high moments to be ``easy'' it sufficed to
       have a heavy hitter. For universal emulation we need to bound from below the relative weight of each key with respect to the remaining tail.

\paragraph{Experimental results.}
Table~\ref{table:overhead} reports the universal emulation overhead on our datasets with
$\ell_1$, $\ell_2$, and multi-objective concave-sublinear sampling probabilities.
We observe that while $\ell_2$ sampling emulates high moments
extremely well, its universal overhead is very large due to poor
emulation of ``slow growth'' functions.  The better universal overhead $h$ of $\ell_1$ and concave-sublinear samples satisfies $h \in [92,1673]$. It is practically meaningful as it is in the regime where $h\epsilon^{-2} \ll n$ (the number of keys is significantly larger than the sample size needed to get normalized root mean squared error $\epsilon$).

The universal emulation overhead was computed using
Lemma~\ref{universaloverhead:lemma} with respect to base PPS
probabilities for the off-the-shelf sampling schemes (given in Section~\ref{offtheshelf:sec}).

\section{Conclusion}
In this work, we studied composable sampling sketches under two beyond-worst-case scenarios. In the first, we assumed additional information about the input distribution in the form of advice. We designed and analyzed a sampling sketch based on the advice, and demonstrated its performance on real-world datasets.

In the second scenario, we proposed a framework where the performance and statistical guarantees of sampling sketches were analyzed in terms of supported {\em frequency-function combinations}.
We demonstrated analytically and empirically that sketches originally
designed to sample according to ``easy'' functions of frequencies on ``hard'' frequency
distributions turned out to be accurate for sampling according to ``hard'' functions of frequencies
on ``practical'' frequency distributions.
In particular, on ``practical''  distributions we could accurately
approximate high frequency moments ($p>2$) and the rank versus
frequency distribution using small composable sketches.

\section*{Acknowledgments}
We are grateful to the authors of \cite{hsuIKV:ICLR2019}, especially Chen-Yu Hsu and Ali Vakilian, for sharing their data, code, and predictions with us. We thank Ravi Kumar and Robert Krauthgamer for helpful discussions.
The work of Ofir Geri was partially supported by Moses Charikar's Simons Investigator Award.
The work of Rasmus Pagh was supported by Investigator Grant 16582, Basic Algorithms Research Copenhagen (BARC), from the VILLUM Foundation.

\bibliographystyle{plain}
\bibliography{new_thesis_ref}

\appendix
\section{Additional Experiments} \label{actualmore:sec}
\begin{figure*}[t]
\centering
  \includegraphics[width=0.45\textwidth]{aol_day50_ell_3.pdf}
  \includegraphics[width=0.45\textwidth]{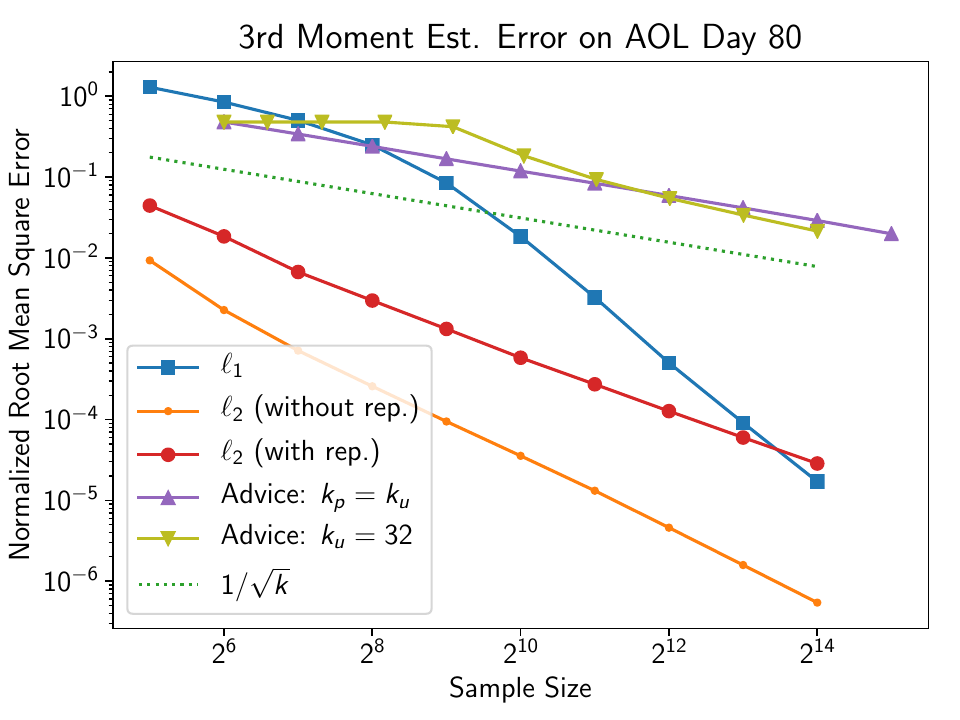}
  \includegraphics[width=0.45\textwidth]{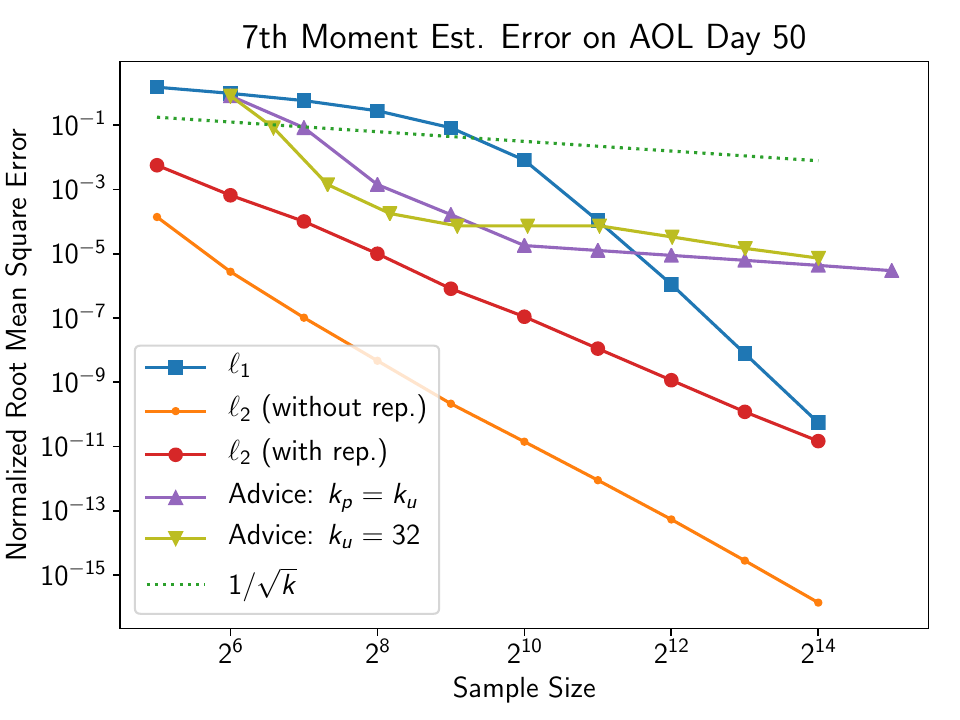}
  \includegraphics[width=0.45\textwidth]{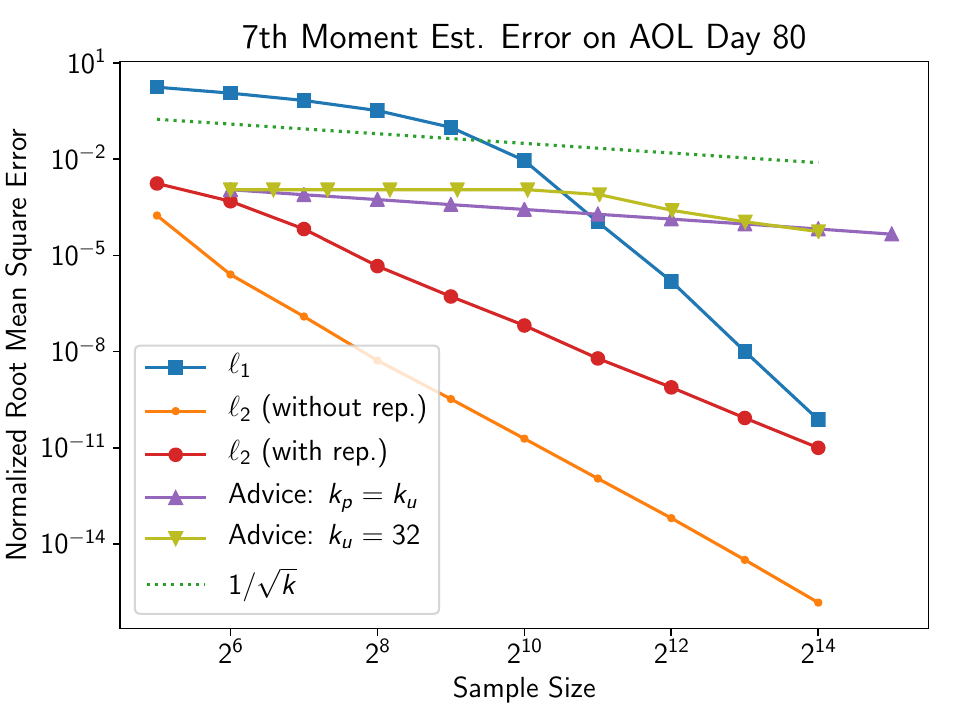}
  \includegraphics[width=0.45\textwidth]{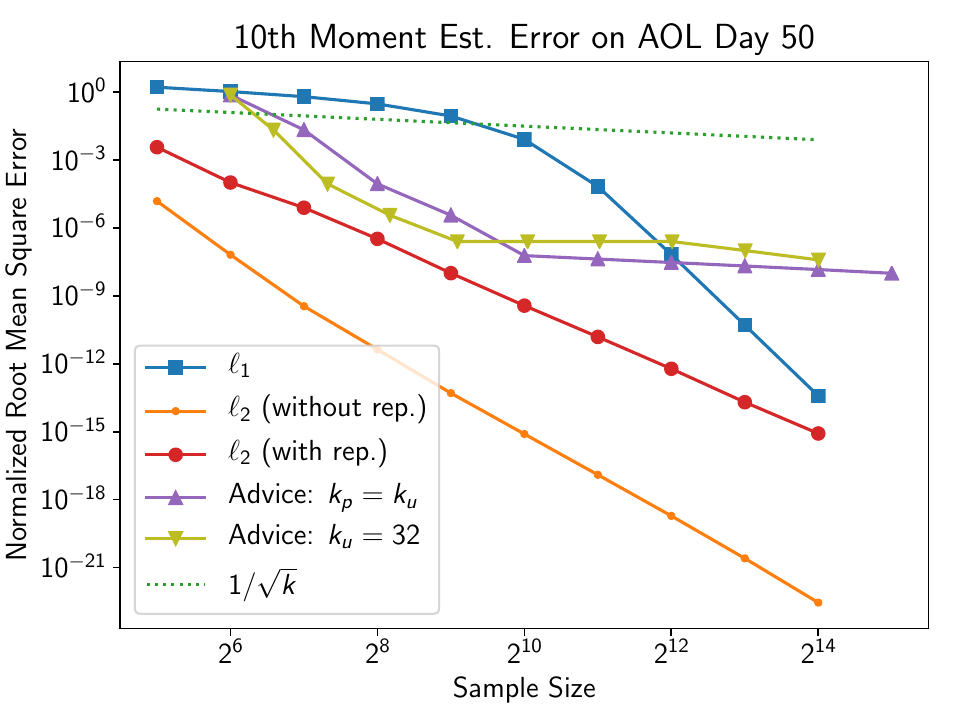}
  \includegraphics[width=0.45\textwidth]{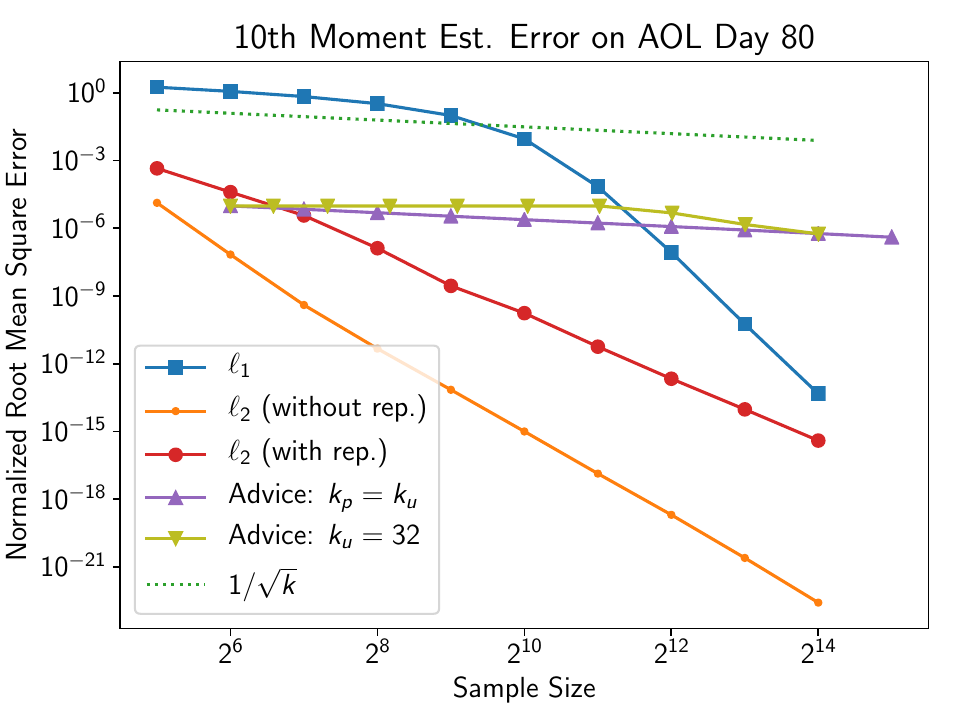}
\caption{NRMSE for estimating the 3rd, 7th, and 10th moments on the AOL dataset (days 50 and 80 with learned
  advice from~\cite{hsuIKV:ICLR2019}).}
\label{advicemore:fig}
\end{figure*}
\begin{figure*}[t]
\centering
  \includegraphics[width=0.45\textwidth]{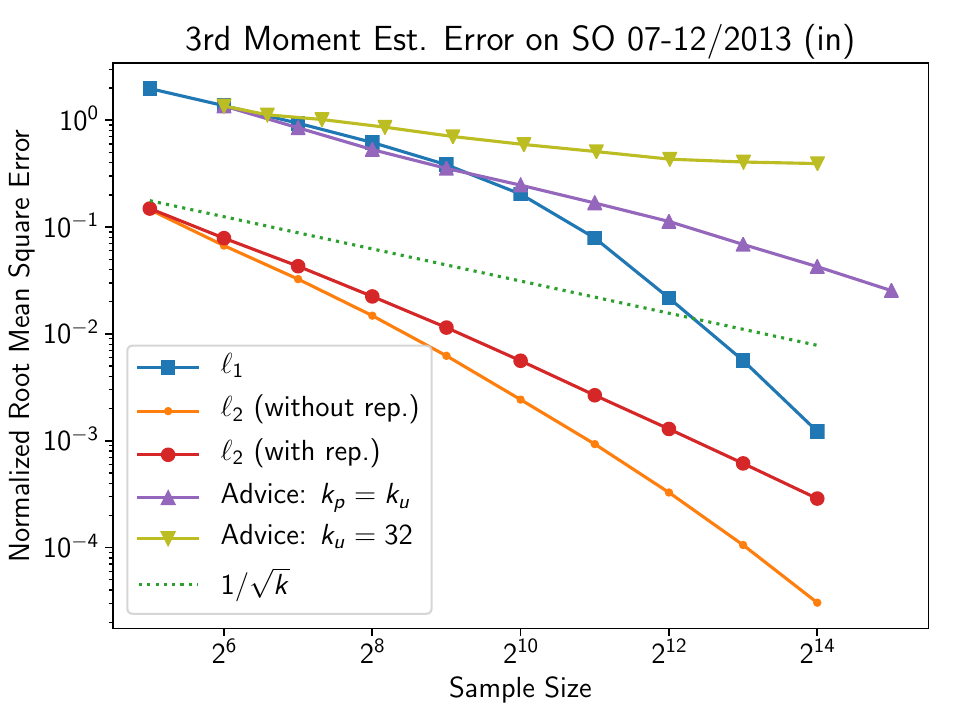}
  \includegraphics[width=0.45\textwidth]{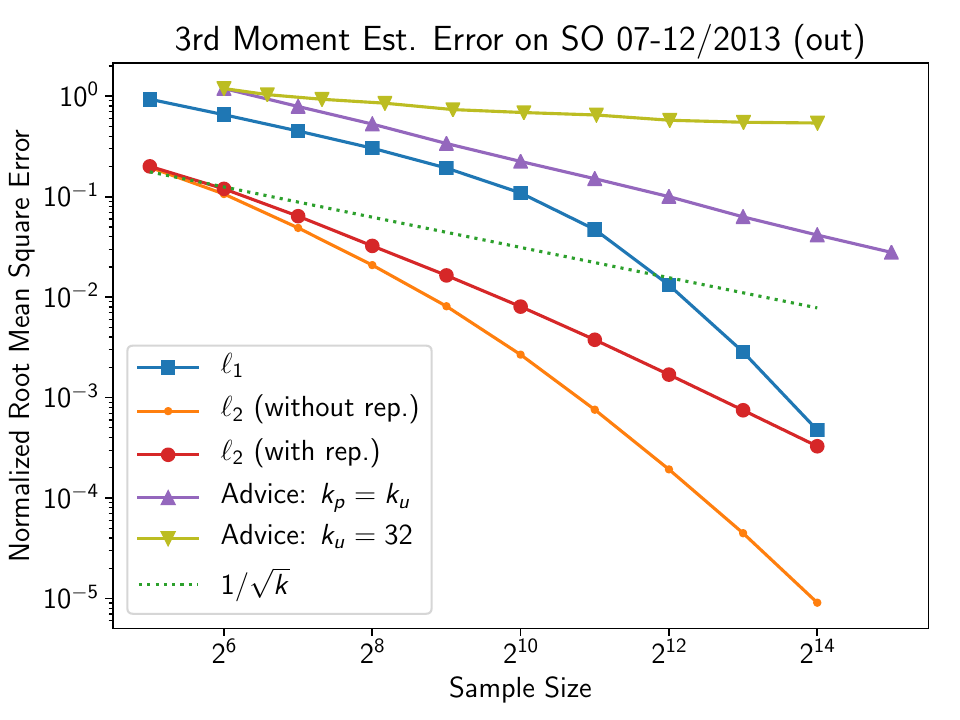}
  \includegraphics[width=0.45\textwidth]{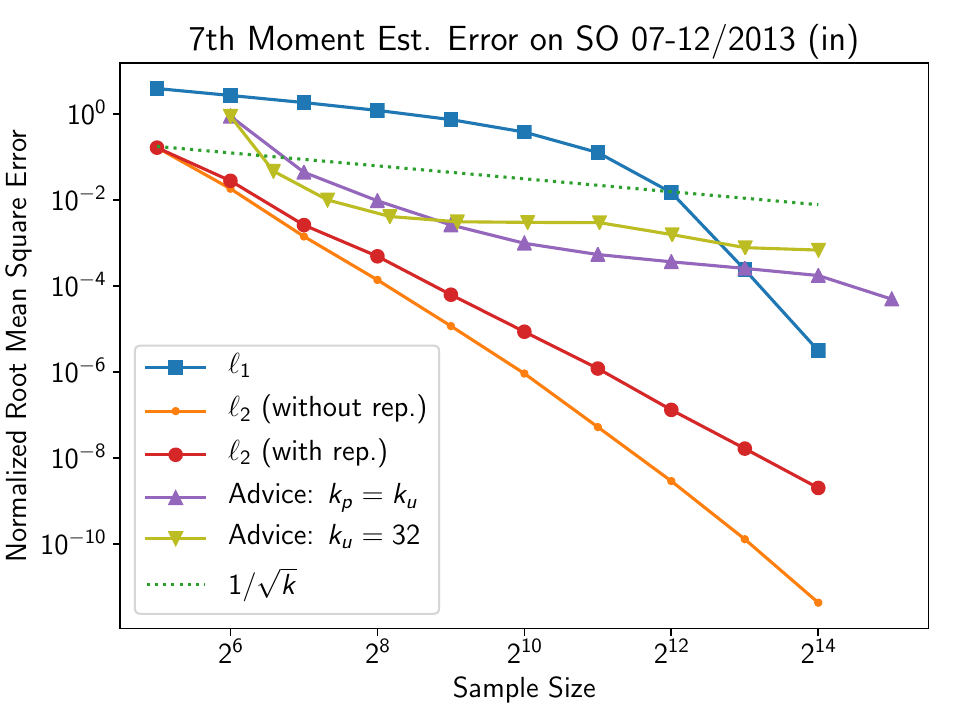}
  \includegraphics[width=0.45\textwidth]{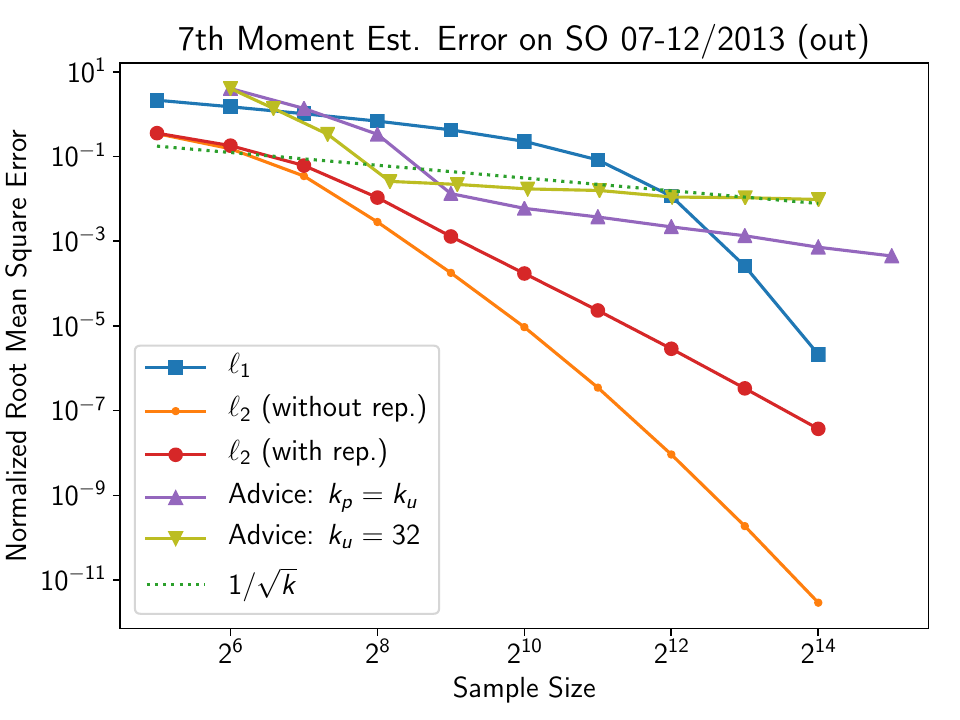}
  \includegraphics[width=0.45\textwidth]{stack_overflow_2013_07-12_correct_edge_dir_in_ell_10.pdf}
  \includegraphics[width=0.45\textwidth]{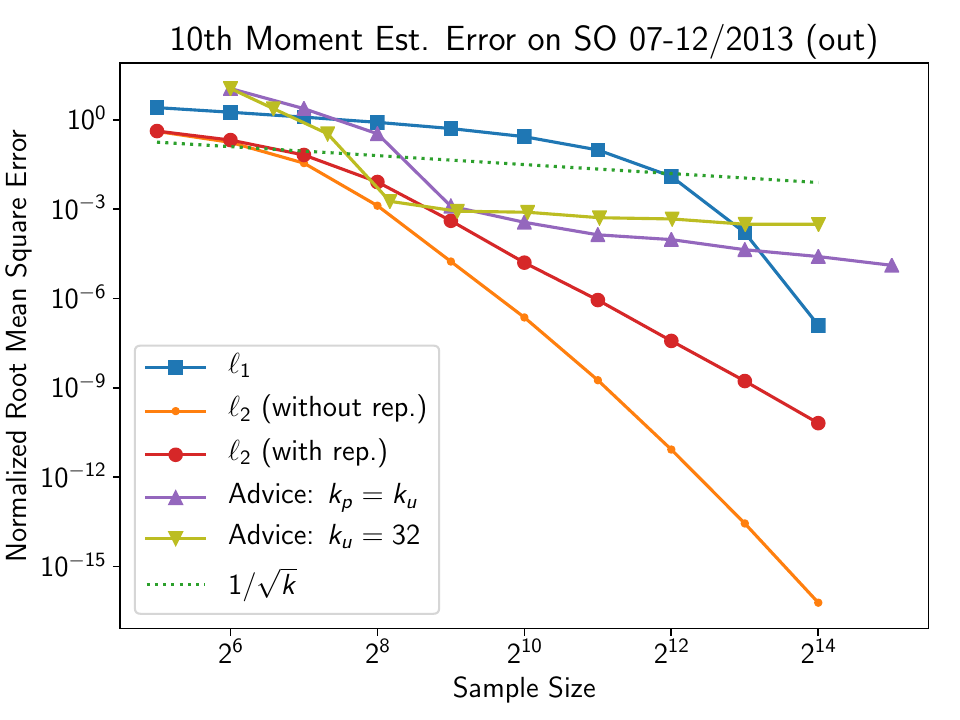}
\caption{NRMSE for estimating the 3rd, 7th, and 10th moments on the Stack Overflow dataset (incoming and outgoing edges), based on past frequencies.}
\label{advicemore2:fig}
\end{figure*}
Additional results on the performance of sampling by advice are provided in Figures~\ref{advicemore:fig} and~\ref{advicemore2:fig}.

Estimates of the distributions of frequencies for all datasets are reported in Figure~\ref{moreestfreq:fig}.  The estimates are from PPSWOR ($\ell_1$ without replacement) and $\ell_2$ (with replacement) samples of sizes $k=32$ and $k=1024$.  For each key $x$ in the sample, we estimate its rank in the dataset, that is, the number of keys $y$ with frequency $w_y \geq w_x$.  The rank estimate for a sampled key $x$ is computed from the sample, by estimating the $f$-statistics for the threshold function $f(w) := I_{w\geq w_x}$.
The pairs of frequency and estimated rank (for each key in the sample) are then plotted.  The figures also provide the exact frequency distribution.
\begin{figure*}[t]
\centering
\includegraphics[width=0.49\textwidth]{dist_AOL.pdf}
\includegraphics[width=0.49\textwidth]{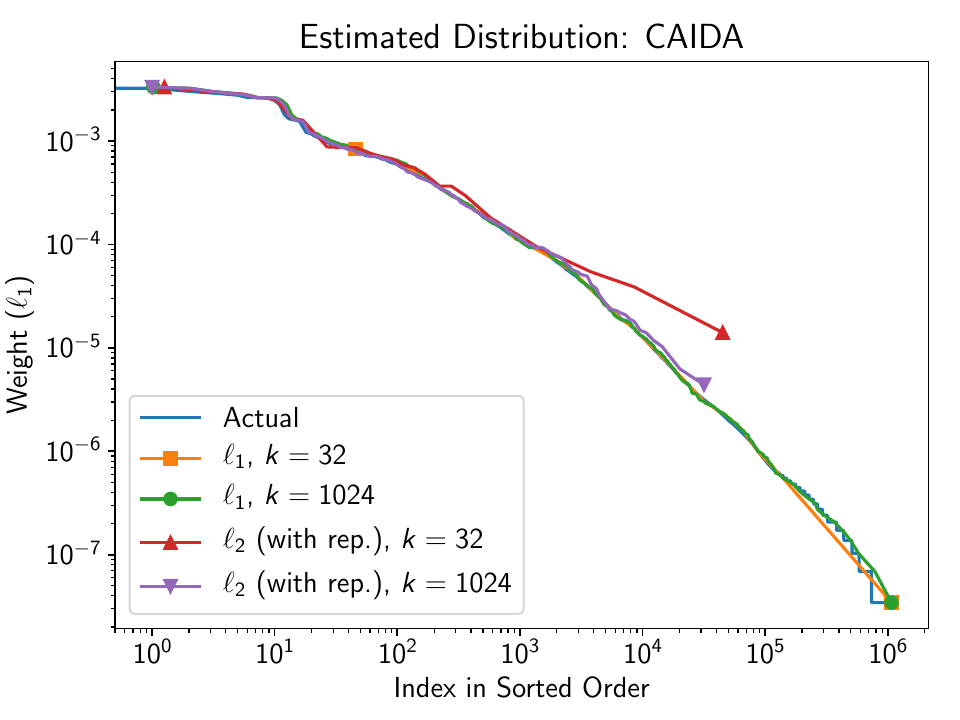}
\includegraphics[width=0.49\textwidth]{dist_UGR.pdf}
\includegraphics[width=0.49\textwidth]{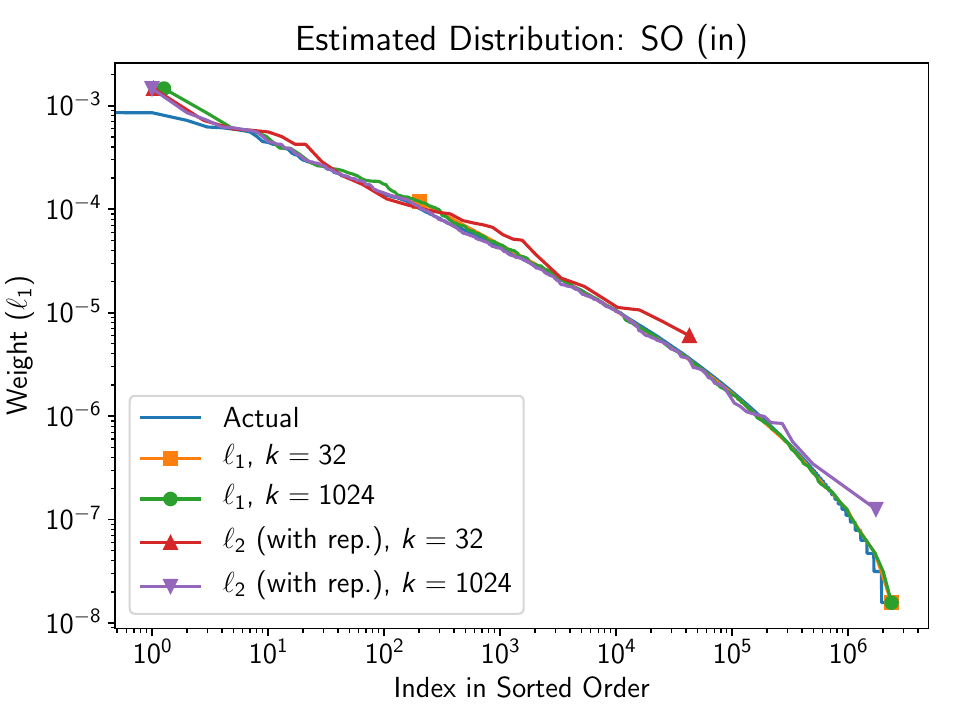}
\includegraphics[width=0.49\textwidth]{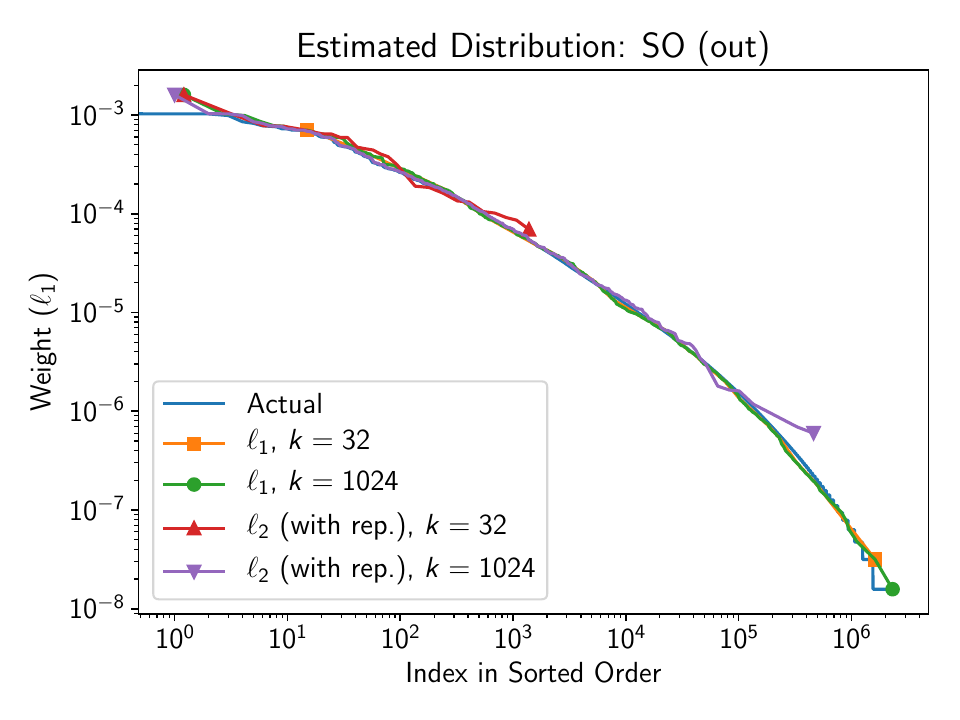}
\caption{Actual and estimated distribution of frequency by rank.  The estimates use PPSWOR and with-replacement $\ell_2$ sampling and sample sizes $k=32,1024$.}
\label{moreestfreq:fig}
\end{figure*}

Additional results on the estimation of moments from PPSWOR ($\ell_1$ sampling without replacement) and $\ell_2$ samples (with and without replacement) are reported in Figure~\ref{moremomentsest:fig}.
As suggested by our analysis, the estimates on all datasets are surprisingly accurate even with respect to the benchmark upper bound $1/\sqrt{k}$ (which follows from the variance bound of weighted with-replacement sampling tailored to the moment we are estimating).  The figures also show the advantage of without-replacement sampling on these skewed datasets.
\begin{figure*}[thb]
\centering
  \includegraphics[width=0.32\textwidth]{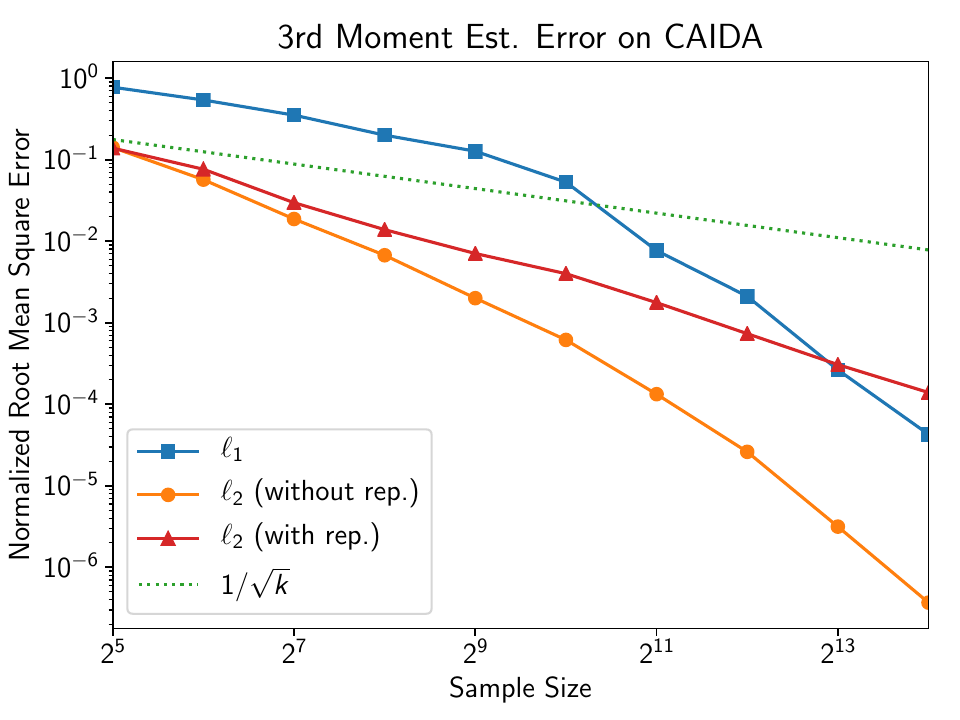}
  \includegraphics[width=0.32\textwidth]{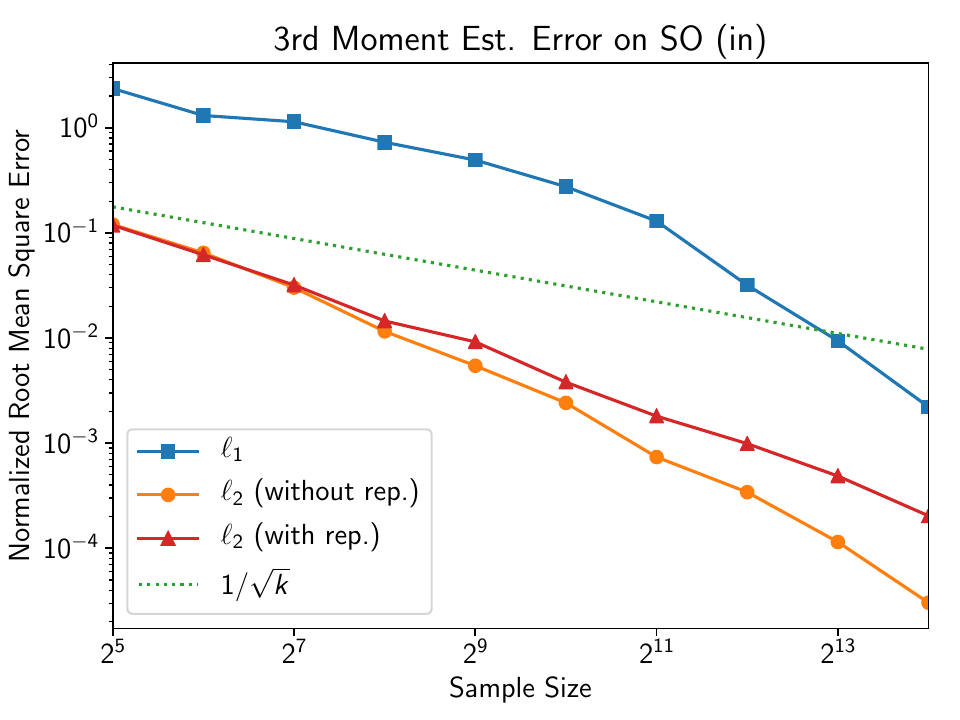}
  \includegraphics[width=0.32\textwidth]{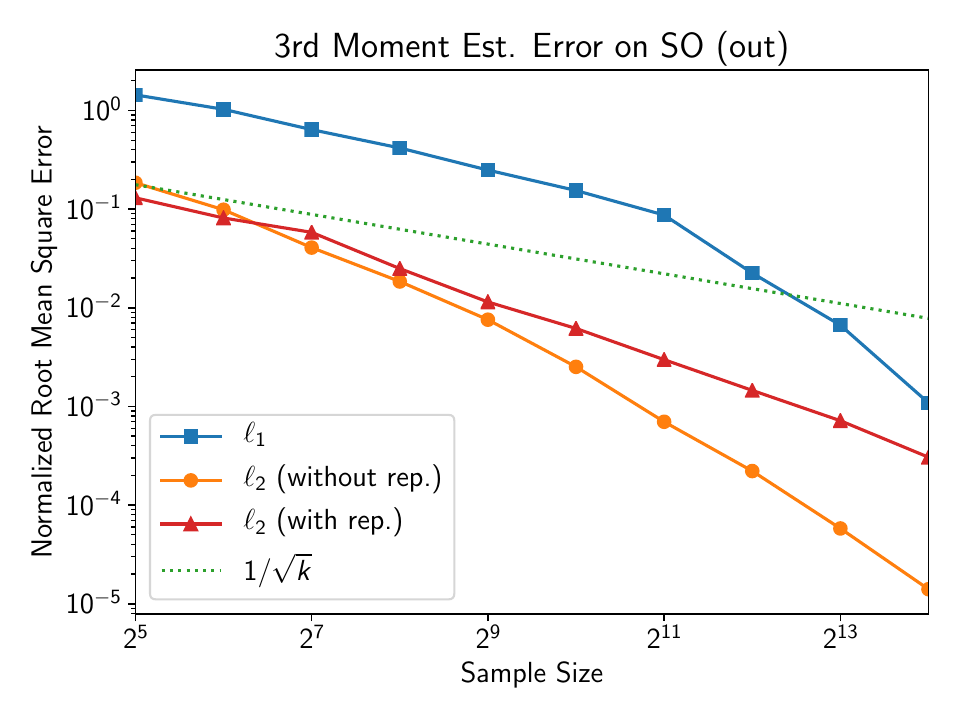}\\
    \includegraphics[width=0.32\textwidth]{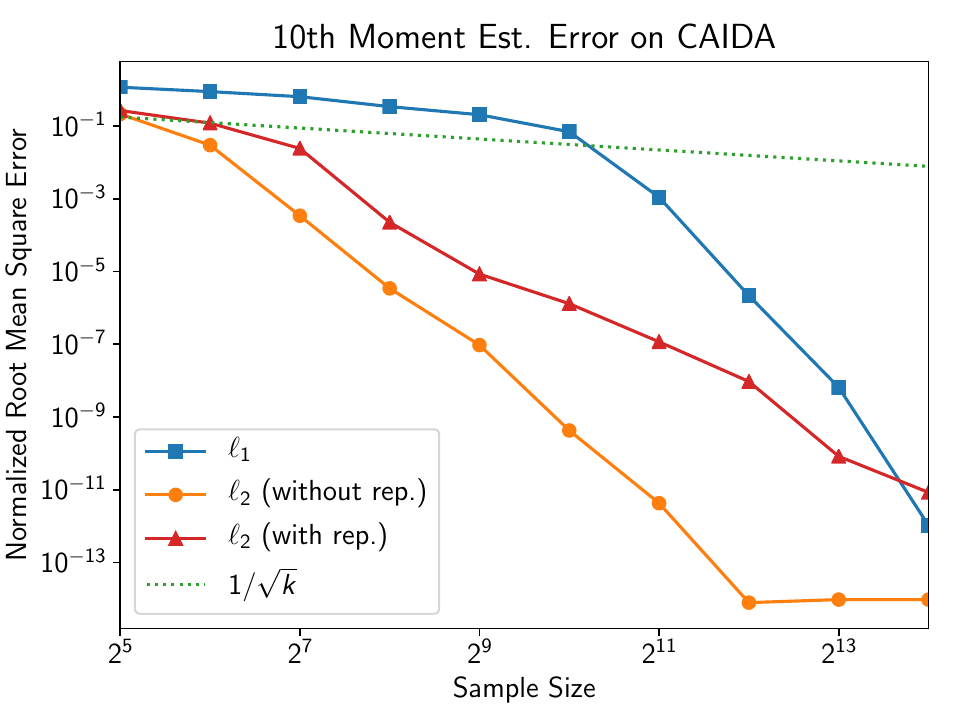}
    \includegraphics[width=0.32\textwidth]{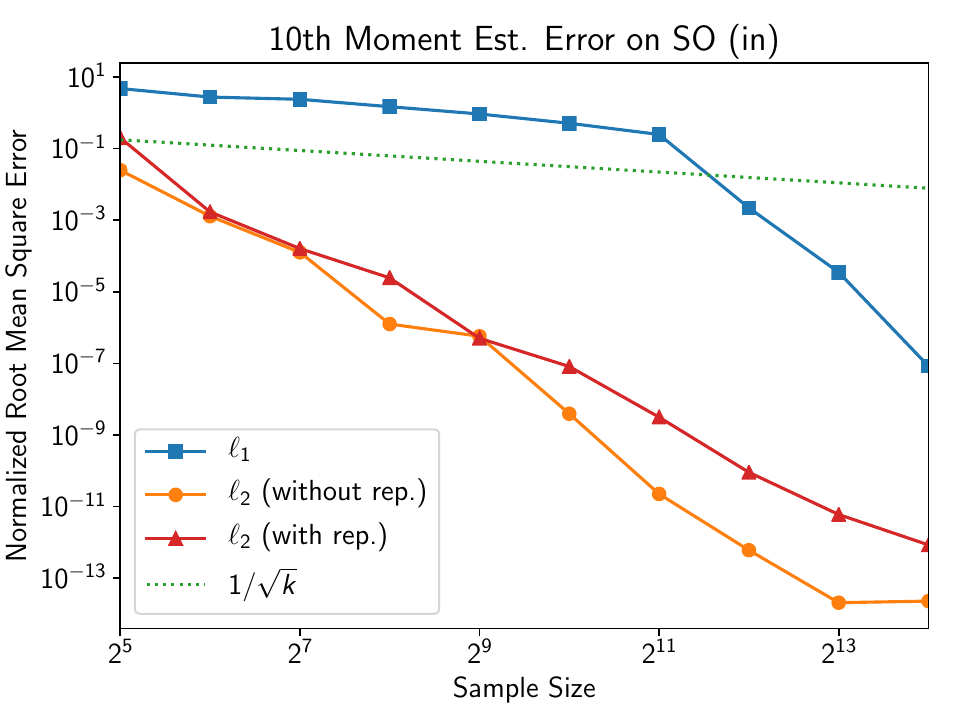}
    \includegraphics[width=0.32\textwidth]{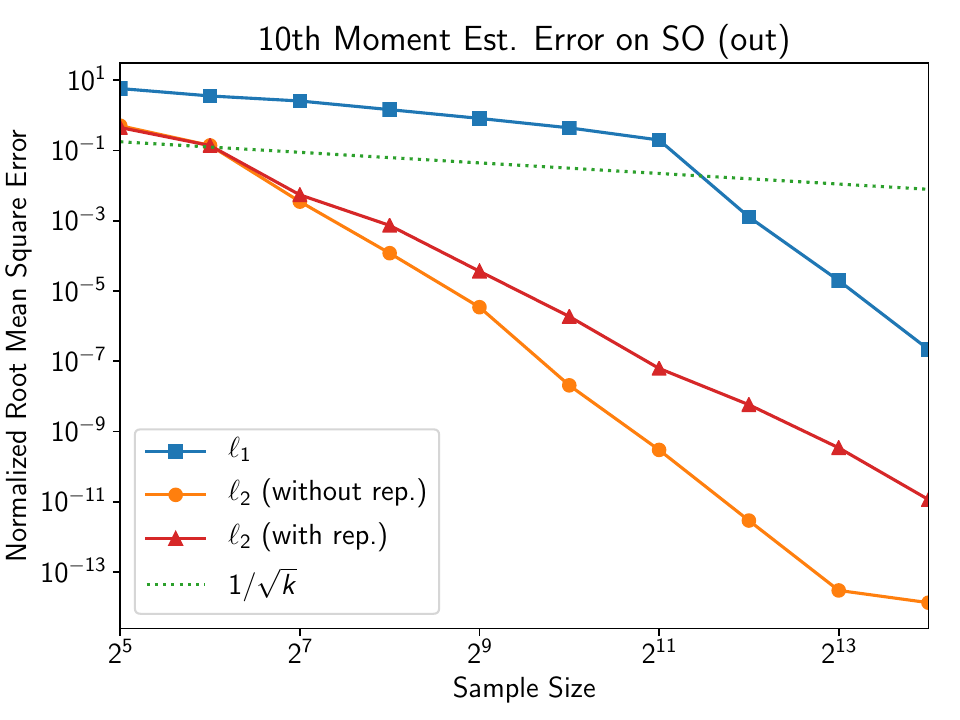}\\
    \includegraphics[width=0.32\textwidth]{aol_aggregated_all_ell_3.pdf}
    \includegraphics[width=0.32\textwidth]{ugr_may_week2_aggregated_directed_ell_3.pdf}\\
    \includegraphics[width=0.32\textwidth]{aol_aggregated_all_ell_10.pdf}
    \includegraphics[width=0.32\textwidth]{ugr_may_week2_aggregated_directed_ell_10.pdf}
\caption{Estimating the 3rd and 10th moments on various datasets using PPSWOR ($\ell_1$ sampling without replacement) and $\ell_2$ samples (with and without replacement).  The error is averaged over 50 repetition.}
\label{moremomentsest:fig}
\end{figure*}

\end{document}